\theoremstyle{definition}
\newtheorem{thm}{Theorem}
\newtheorem{lem}{Lemma}
\newtheorem{cor}{Corolary}
\newtheorem{defn}{Definition}[section]
\newtheorem{prop}{Proposition}
\newtheorem{rem}{Remark}
\newcounter{mnotecount}[section]
\begin{document}
\title{Singularity theorems and the inclusion of torsion in affine theories of gravity}
\author{Paulo Luz$^{(1,2,3)}$ and Filipe C. Mena$^{(1,3)}$\\
\\
{\small{}$^{(1)}$Centro de Análise Matemática, Geometria e Sistemas Dinâmicos,}\\
{\small{}Instituto Superior Técnico, Universidade de Lisboa, Av. Rovisco Pais 1, 1049-001 Lisboa, Portugal} \\
{\small{}$^{(2)}$ Centro de Astrofísica e Gravitação - CENTRA, Departamento de Física,}\\
{\small{}Instituto Superior Técnico, Universidade de Lisboa, Av. Rovisco Pais 1, 1049-001 Lisboa, Portugal} \\
{\small{}$^{(3)}$Centro de Matemática, Universidade do Minho, 4710-057 Braga, Portugal} }

\maketitle

\begin{abstract}
We extend the scope of the Raychaudhuri-Komar singularity theorem of General Relativity to affine theories of gravity with and without torsion. We first generalize the existing focusing theorems using time-like and null congruences of curves which are hypersurface orthogonal, showing how the presence of torsion affects the formation of focal points in Lorentzian manifolds. Considering the energy conservation on a given affine gravity theory, we prove new singularity theorems for accelerated curves in the cases of Lorentzian manifolds containing perfect fluids or scalar field matter sources.
\end{abstract}

\section{Introduction}

The existence and genericity of space-time singularities has been an issue of great debate since the early years of General Relativity (GR), when they were thought to be a problem of some idealized and highly symmetric solutions of the Einstein field equations, and more realistic models were supposed to overcome such problems \cite{Senovilla_Garfinkle}.
This debate propelled Raychaudhuri \cite{Raychaudhuri_1955} and, independently, Komar \cite{Komar_1956}, to study the nature of singularities and their relation to the space-time symmetries. In his seminal work \cite{Raychaudhuri_1955}, studying the kinematics of a fluid in the context of GR, Raychaudhuri derived an equation for the evolution of the expansion of time-like geodesic congruences. Applying this equation to a model with a particular physical source, he provided the first evidence that the occurrence of singularities was not a direct consequence of the imposed symmetries  (see e.g. \cite{Senovilla_1998} for a review).

However, it was not until a decade later that the results derived by Penrose \cite{Penrose_1965}, and later extended by Hawking (see e.g. \cite{Haw_Ellis}), established that under more general physical conditions, the formation of space-time singularities was inevitable. Relying on the Raychaudhuri equation, and imposing some rather general constraints on the space-time geometry, it was possible to establish sufficient conditions for a congruence of geodesics to focus on a point, usually referred as a focal point of the congruence (also called caustic)\footnote{In this article, we will make no distinction between focal or conjugate points, see e.g. \cite{Senovilla_1998} for the precise definitions.}. Using the property that geodesics are extremal curves of the space-time interval, Penrose then related the formation of a focal point to the occurrence of a space-time singularity, in the sense that the geodesics of the congruence were inextensible beyond that point.

After the initial works of Hawking and Penrose, many extensions and reformulations to their singularity theorems have been made (see e.g. \cite{Senovilla_1998, Senovilla_Garfinkle} for a review in the context of space-times without torsion). In particular, since the 1970's there was a growing interest in studying whether the inclusion of space-time torsion could prevent the formation of singularities \cite{Trautman, Stewart, Kop}. This led to the formulation of singularity theorems for space-times with torsion \cite{Hehl1, Schmidt}. Indeed, using the incompleteness of time-like metric geodesics to probe the regularity of space-times in the Einstein-Cartan theory of gravity, Hehl \emph{et al}. \cite{Hehl2} found that the original Hawking-Penrose singularity theorems were equally valid in that theory, with the correction that the matter constraints had to be imposed on a modified stress-energy tensor. More recently, \cite{Torralba} related the occurrence of singularities with the existence of horizons, and formulated singularity theorems of Hawking-Penrose type for non-geodesic curves in particular Teleparallel and Einstein-Cartan theories of gravity.

In this paper, we consider the more general setting of affine theories of gravity with a non-symmetric affine connection, compatible with the metric. Our main questions are: What are the new geometric necessary conditions for the existence of focal points? How can we generalize the focusing theorems to any causal curves, not necessarily geodesics, in Lorentzian manifolds with torsion? In which cases can we prove singularity theorems from the focusing theorems, without imposing any further constrains on the manifold?

In order to answer those questions, we need to include general torsion terms in the structure equations of Lorentzian manifolds and carefully analyze the new terms. We will rely on the assumption that a hypersurface orthogonal congruence of curves exists. So, first, in Section \ref{frobeni}, we establish conditions for the existence of such congruences based on Frobenius theorem.  Then, in sections \ref{sec:Caustics_timelike} and \ref{sec:Caustics_null}, we prove new results for the focusing of time-like and null, hypersurface orthogonal congruences.

Although a necessary step towards the proof of the classical singularity theorems, those focusing results are not sufficient to prove singularity formation on a Lorentzian manifold, in general. In order to do so, we shall analyze the energy conservation equation of a given gravity theory. This is especially relevant in the cases of non-metric geodesic curves since, in such cases, one can not rely on the length extremality property of the curves. So, in Section~\ref{Singularity theorems in affine theories of gravity}, we will consider  separately the cases of metric geodesics and accelerated curves. In the latter case, we will first consider two cases of physically motivated energy-momentum tensors: perfect fluids with particular equations of state and scalar fields. In each case, we derive a generalized Raychaudhuri-Komar singularity theorem, applicable to a wide class of affine theories of gravity, without imposing any further symmetries on the manifold. Finally, we will consider the particular case of the Einstein-Cartan theory, allowing us to extend our results to perfect fluids with a cosmological constant and without restrictions on the equation of state.

In this article, we consider Lorentzian metric tensors, which we will use implicitly to lower and raise tensorial indices for the components of a tensor field in a given coordinate basis. Greek indices run from $0$ to $N-1$ and Latin indices run from $1$ to $N-1$, where $N$ denotes the dimension of the manifold.
Moreover, we will assume the metric signature $\left(-+++...\right)$ and follow the conventions for the tensor notation of \cite{Luz}, in particular for the Riemann and torsion tensors.


\section{Curve focusing in Lorentzian manifolds with torsion}\label{Curve focusing in Lorentzian manifolds with torsion}

In this section, we prove new focus theorems for Lorentzian manifolds endowed with a metric compatible, affine connection. In order to do that we will consider manifolds admitting hypersurface orthogonal congruences of curves.  The results of this section are purely geometric and do not rely on any field equations of a gravity theory.

\subsection{Hypersurface orthogonal congruence\label{sec:Hypersurface-orthogonal-congruence}}
\label{frobeni}

We establish under what conditions can hypersurface orthogonal congruences of curves be locally defined in a Lorentzian manifold with torsion.

Let $\left(M,g,S\right)$ be a Lorentzian manifold of dimension $N\geq2$ with metric $g$ and torsion $S$. Given a point $p\in M$, consider a subset $U_{p}^{*}\subset T_{p}^{*}M$ of dimension $N-m>0$ and $W_{p}\subset T_{p}M$, with dimension $m\ge1$, whose elements $\omega\in W_{p}$ are such that $u\left(\omega\right)=0$, for all $u\in U_{p}^{*}$.

Given an open neighborhood $\mathcal{O}$ of $p\in M$, let $W\subset TM$ be the union of all $W_{q}$, $\forall\,q\in\mathcal{O}$, and $U^{*}$ represent the union of $U_{q}^{*}$. We require $U^{*}$ to be smooth, that is, $U^{*}$ is spanned by $N-m$, $C^{\infty}$ 1-forms. Then, it is well know (Frobenius' theorem) that $W$ admits integral sub-manifolds if and only if, for all $X,Y\in W$ and for all $u\in U^{*}$,
\begin{equation}
u\left(\mathcal{L}_{X}Y\right)=0\,,\label{eq:hyperOrth_condition_Frobenius}
\end{equation}
where $\mathcal{L}_{X}Y$ represents the Lie derivative of the vector field $Y$ along the integral curve of the vector field $X$.
\begin{lem}
\label{Lemma:hypersurface_orth_iff_general}Condition \eqref{eq:hyperOrth_condition_Frobenius} is equivalent to 
\begin{equation}
\nabla_{\left[\alpha\right.}u_{\left.\beta\right]}+S_{\alpha\beta}{}^{\gamma}u_{\gamma}=\sum_{i=1}^{N-m}\left(u_{i}\right)_{\left[\alpha\right.}\left(\chi_{i}\right)_{\left.\beta\right]}\,,\label{eq:hyperOrth_iff_general}
\end{equation}
where $u,\,u_{i}\in U^{*}$ and $\chi_{i}$ are smooth arbitrary 1-forms.
\end{lem}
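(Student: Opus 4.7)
The idea is to rewrite the Lie bracket $\mathcal L_X Y=[X,Y]$ via the affine connection so that the torsion appears explicitly, then transfer all derivatives onto $u$ using the fact that $u(X)=u(Y)=0$ identically on $\mathcal O$, and finally reduce the Frobenius condition to a standard linear-algebra statement about 2-forms that annihilate the distribution $W$.

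First, with the sign convention of \cite{Luz}, the Lie bracket of any two vector fields $X,Y$ can be expressed as
\begin{equation*}
[X,Y]^{\gamma} = X^{\alpha}\nabla_{\alpha}Y^{\gamma} - Y^{\alpha}\nabla_{\alpha}X^{\gamma} - 2\,S_{\alpha\beta}{}^{\gamma}X^{\alpha}Y^{\beta}.
\end{equation*}
Contracting with $u_{\gamma}$ produces three terms. Now, because $X,Y\in W$ and $u\in U^{*}$, the scalars $u_{\gamma}Y^{\gamma}$ and $u_{\gamma}X^{\gamma}$ vanish identically on $\mathcal O$; differentiating these identities along $X$ and $Y$ and using the Leibniz rule gives
\begin{equation*}
u_{\gamma}X^{\alpha}\nabla_{\alpha}Y^{\gamma} = -Y^{\gamma}X^{\alpha}\nabla_{\alpha}u_{\gamma},\qquad u_{\gamma}Y^{\alpha}\nabla_{\alpha}X^{\gamma} = -X^{\gamma}Y^{\alpha}\nabla_{\alpha}u_{\gamma}.
\end{equation*}
Substituting into the bracket expression and relabeling indices, the two derivative terms antisymmetrize in $\alpha,\beta$, producing
\begin{equation*}
u_{\gamma}[X,Y]^{\gamma} = -2\,X^{\alpha}Y^{\beta}\bigl(\nabla_{[\alpha}u_{\beta]}+S_{\alpha\beta}{}^{\gamma}u_{\gamma}\bigr).
\end{equation*}
Consequently, \eqref{eq:hyperOrth_condition_Frobenius} holds for all $X,Y\in W$ and all $u\in U^{*}$ if and only if, at each $q\in\mathcal O$, the 2-form $\Omega^{(u)}_{\alpha\beta}:=\nabla_{[\alpha}u_{\beta]}+S_{\alpha\beta}{}^{\gamma}u_{\gamma}$ annihilates $W_{q}\otimes W_{q}$.

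The remaining step is purely algebraic. Choose a smooth local coframe on $\mathcal O$ whose first $N-m$ members $\{u_{1},\dots,u_{N-m}\}$ span $U^{*}$; this is possible by the smoothness assumption on $U^{*}$. A pointwise 2-form vanishes on $W_{q}\otimes W_{q}$ precisely when, in such a coframe, every wedge-basis term contains at least one factor from $\{u_{1},\dots,u_{N-m}\}$, equivalently when it can be written as $\sum_{i=1}^{N-m}(u_{i})_{[\alpha}(\chi_{i})_{\beta]}$ for some 1-forms $\chi_{i}$. Since the coframe is smooth and $\Omega^{(u)}$ is a smooth tensor field on $\mathcal O$, the $\chi_{i}$ extracted by this decomposition are themselves smooth 1-forms, yielding exactly \eqref{eq:hyperOrth_iff_general}.

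The principal delicacy is in the first step: one must match the sign and index conventions of \cite{Luz} carefully so that the torsion contribution assembles into $S_{\alpha\beta}{}^{\gamma}u_{\gamma}$ rather than a half or a negative of it, and one must verify that the combination $\nabla_{[\alpha}u_{\beta]}+S_{\alpha\beta}{}^{\gamma}u_{\gamma}$ is connection-independent (reducing to $\partial_{[\alpha}u_{\beta]}$), which both confirms the calculation and explains why the torsion-corrected combination is the natural object. The final algebraic step is standard, but it must be carried out with a smooth choice of complementary coframe, not merely a pointwise one, to promote the decomposition of $\Omega^{(u)}$ to a genuine identity of smooth 1-forms on $\mathcal O$.
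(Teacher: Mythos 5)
Your proof is correct and follows essentially the same route as the paper: express the Lie bracket through the affine connection so the torsion appears, contract with $u$, use $u(X)=u(Y)=0$ and the Leibniz rule to shift derivatives onto $u$, and conclude from the arbitrariness of $X,Y\in W$. The only difference is that you spell out the final linear-algebra step (a 2-form annihilates $W\otimes W$ iff it lies in the ideal generated by a coframe of $U^{*}$, with smooth $\chi_{i}$), which the paper compresses into ``since the identity must be valid for all $X$ and $Y$''; your sanity check that $\nabla_{[\alpha}u_{\beta]}+S_{\alpha\beta}{}^{\gamma}u_{\gamma}$ reduces to $\partial_{[\alpha}u_{\beta]}$ is a nice addition but not part of the paper's argument.
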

\begin{proof}
Using the definitions of Lie derivative for vector fields and covariant derivative for a general affine, metric compatible connection we have that 
\[
\left(\mathcal{L}_{X}Y\right)^{\gamma}:=\left[X,Y\right]^{\gamma}=X^{\beta}\nabla_{\beta}Y^{\gamma}-Y^{\beta}\nabla_{\beta}X^{\gamma}-2S_{\alpha\beta}{}^{\gamma}X^{\alpha}Y^{\beta}\,.
\]
Then, from \eqref{eq:hyperOrth_condition_Frobenius}, we find for all $X,Y\in W$ and for all $u\in U^{*}$ 
\[
X^{\alpha}Y^{\beta}\left(\nabla_{\left[\beta\right.}u_{\left.\alpha\right]}-S_{\alpha\beta}{}^{\gamma}u_{\gamma}\right)=0\,.
\]
Since the identity must be valid for all $X$ and $Y$, we get the equality \eqref{eq:hyperOrth_iff_general}.
\end{proof}
Lemma \ref{Lemma:hypersurface_orth_iff_general} states that, locally, $W$ admits integral sub-manifolds if and only if all $u\in U^{*}$ verify Eq.~\eqref{eq:hyperOrth_iff_general}. Two particular cases of interest are when $U^{*}$ is of dimension 1 or 2. In those cases, we have the useful result:
\begin{lem}
\label{Lemma:hypersurface_orth_iff_alt}
Consider a 1-form $u\in U^{*}$ and the associated vector field $u^{\sharp}$, with components $u^{\alpha}=g^{\alpha\beta}u_{\beta}$. If either
\begin{enumerate}
\item $U^{*}$ is of dimension $1$ and $u\left(u^{\sharp}\right)\neq0$,
or
\item $U^{*}$ is of dimension $2$ and $u\left(u^{\sharp}\right)=0$,
\end{enumerate}
then, a congruence of curves, defined in some open set $\mathcal{O}\subset M$ with tangent vector field $u^{\sharp}$, is hypersurface orthogonal if and only if 
\begin{equation}
3u_{\left[\gamma\right.}\nabla_{\beta}u_{\left.\alpha\right]}+u_{\sigma}\left(S_{\beta\alpha}{}^{\sigma}u_{\gamma}+S_{\alpha\gamma}{}^{\sigma}u_{\beta}+S_{\gamma\beta}{}^{\sigma}u_{\alpha}\right)=0\,.\label{eq:hyperOrth_iff_alt}
\end{equation}
\end{lem}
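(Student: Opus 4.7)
The plan is to derive Lemma~\ref{Lemma:hypersurface_orth_iff_alt} from Lemma~\ref{Lemma:hypersurface_orth_iff_general} by collapsing the sum on the right-hand side of Eq.~\eqref{eq:hyperOrth_iff_general} under the hypotheses of case~1 or case~2. The key forward manoeuvre is to multiply Eq.~\eqref{eq:hyperOrth_iff_general} by $u_{\gamma}$ and totally antisymmetrize over $\alpha,\beta,\gamma$. Since $\nabla_{[\alpha}u_{\beta]}$ and $S_{\alpha\beta}{}^{\sigma}u_{\sigma}$ are already antisymmetric in $\alpha,\beta$, the LHS reproduces, up to an overall factor of $3$, the LHS of Eq.~\eqref{eq:hyperOrth_iff_alt}. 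The task then reduces to showing that the antisymmetrized RHS, $\sum_{i} u_{[\gamma}(u_{i})_{\alpha}(\chi_{i})_{\beta]}$, vanishes.

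In case~1, $\dim U^{*}=1$ forces every $(u_{i})$ to be proportional to $u$, so each summand carries two copies of $u$ in the antisymmetrization and vanishes identically. In case~2, one chooses a basis $\{u,v\}$ of $U^{*}$ so that only one non-trivial summand survives, of the form $u_{[\gamma}v_{\alpha}\psi_{\beta]}$; this must be shown to be zero as a $3$-form, i.e.\ $\psi\in\mathrm{span}(u,v)$. Here I would use the null hypothesis $u(u^{\sharp})=0$: contracting Eq.~\eqref{eq:hyperOrth_iff_general} with $u^{\alpha}$ kills the $u\wedge\chi$ piece (because $u^{\alpha}u_{\alpha}=0$ and $u^{\alpha}\nabla_{\beta}u_{\alpha}=\tfrac{1}{2}\nabla_{\beta}(u^{\alpha}u_{\alpha})=0$) and, together with the gauge freedom $(\chi,\psi)\mapsto(\chi+au+bv,\psi+bu+dv)$ in the decomposition (which leaves $\tilde{\omega}$ invariant), forces $\psi$ into $\mathrm{span}(u,v)$.

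For the backward direction, I would recast Eq.~\eqref{eq:hyperOrth_iff_alt} as $u\wedge\tilde{\omega}=0$ where $\tilde{\omega}_{\alpha\beta}:=\nabla_{[\alpha}u_{\beta]}+S_{\alpha\beta}{}^{\sigma}u_{\sigma}$. By the standard pointwise exterior-algebra fact that a $2$-form annihilated by a nonzero $1$-form is divisible by that $1$-form, there exists a $1$-form $\chi$ with $\tilde{\omega}_{\alpha\beta}=u_{[\alpha}\chi_{\beta]}$. Since $u\ne 0$ in both cases (either because $u(u^{\sharp})\ne 0$ in case~1, or because $u$ is a non-trivial basis element of $U^{*}$ in case~2), this supplies the decomposition demanded by Eq.~\eqref{eq:hyperOrth_iff_general} with only a single non-trivial term (taking the remaining $\chi_{i}=0$), whence Lemma~\ref{Lemma:hypersurface_orth_iff_general} yields hypersurface orthogonality.

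The main obstacle is the case-2 forward implication: unlike case~1, the vanishing of the RHS after the $u_{\gamma}$-antisymmetrization is not automatic, and genuinely requires the null condition together with a careful use of the non-uniqueness of the decomposition in Eq.~\eqref{eq:hyperOrth_iff_general}. Everything else is essentially Frobenius for a single nonzero $1$-form, adapted to the metric-compatible connection with torsion by the shift $\nabla_{[\alpha}u_{\beta]}\mapsto\nabla_{[\alpha}u_{\beta]}+S_{\alpha\beta}{}^{\sigma}u_{\sigma}$, which is exactly the object appearing in both Eq.~\eqref{eq:hyperOrth_iff_general} and Eq.~\eqref{eq:hyperOrth_iff_alt}.
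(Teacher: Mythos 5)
Your reading of Eq.~\eqref{eq:hyperOrth_iff_alt} as $u\wedge\tilde{\omega}=0$, with $\tilde{\omega}_{\alpha\beta}:=\nabla_{[\alpha}u_{\beta]}+S_{\alpha\beta}{}^{\sigma}u_{\sigma}$ the left-hand side of Eq.~\eqref{eq:hyperOrth_iff_general}, together with the pointwise divisibility of a $2$-form by a nonvanishing $1$-form that annihilates it, is the right way to make the paper's one-sentence argument precise; case~1 is correct and complete. The genuine gap is in the case-2 forward implication — exactly where you locate the difficulty — but your proposed fix does not close it. Contracting $\tilde{\omega}_{\alpha\beta}$ with $u^{\alpha}$, nullity and metric compatibility kill only the term $u^{\alpha}\nabla_{\beta}u_{\alpha}$; what survives is $\tfrac{1}{2}u^{\alpha}\nabla_{\alpha}u_{\beta}+u^{\alpha}S_{\alpha\beta}{}^{\sigma}u_{\sigma}$, whereas the right-hand side contracts to $\tfrac{1}{2}\left(u\cdot v\right)\left(\chi_{2}\right)_{\beta}$ modulo $\mathrm{span}(u,v)$. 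Hence the class of $\chi_{2}$ modulo $\mathrm{span}(u,v)$ is gauge-invariant and equals, up to the factor $u\cdot v$, the transverse (screen) part of $u^{\alpha}\nabla_{\alpha}u_{\beta}+2u^{\alpha}S_{\alpha\beta}{}^{\sigma}u_{\sigma}$: no choice of the $\chi_{i}$ and no use of the non-uniqueness of the decomposition can make it vanish. The obstruction is real even with $S=0$: in Minkowski space take $k=\partial_{t}+\cos\theta(z)\,\partial_{x}+\sin\theta(z)\,\partial_{z}$; the screen distribution $\mathrm{span}\left(\partial_{y},\,-\sin\theta\,\partial_{x}+\cos\theta\,\partial_{z}\right)$ is integrable, yet $k^{\flat}\wedge dk^{\flat}=\theta'\sin\theta\;dt\wedge dz\wedge dx\neq0$, so Eq.~\eqref{eq:hyperOrth_iff_alt} fails while the $(N-2)$-dimensional distribution admits integral surfaces.

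Consequently, to salvage case~2 you must either (a) add the hypothesis that the screen projection of $u^{\alpha}\nabla_{\alpha}u_{\beta}+2u^{\alpha}S_{\alpha\beta}{}^{\sigma}u_{\sigma}$ vanishes — automatic in the regime the paper actually uses in Section~\ref{sec:Caustics_null}, namely $\nabla_{k}k=0$ with totally antisymmetric torsion — or (b) interpret ``hypersurface orthogonal'' in case~2 as integrability of the codimension-one distribution $\ker u$ (the null hypersurfaces containing the congruence), in which case the proof is word-for-word your case-1 argument (only $u\neq0$ is needed for divisibility) and the contraction manoeuvre is unnecessary. A secondary issue: in the backward direction of case~2 you produce the decomposition of Lemma~\ref{Lemma:hypersurface_orth_iff_general} only for the single $1$-form $u$, while Frobenius for the $(N-2)$-dimensional $W$ requires it for every element of the two-dimensional $U^{*}$; the single equation \eqref{eq:hyperOrth_iff_alt}, which involves $u$ alone, cannot control $\nabla_{[\alpha}v_{\beta]}+S_{\alpha\beta}{}^{\sigma}v_{\sigma}$ for the second basis element $v$, so it cannot by itself yield integrability of $W$.
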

The proof of Lemma \ref{Lemma:hypersurface_orth_iff_alt} follows directly from Eq.~\eqref{eq:hyperOrth_iff_general} in the cases when $N-m=1$ and $N-m=2$, for a particular choice of the 1-forms $\chi_{i}$. Eq.~\eqref{eq:hyperOrth_iff_alt} has the advantage of being independent of the arbitrary 1-forms in Eq.~\eqref{eq:hyperOrth_iff_general}.


\subsection{Focusing of time-like congruences\label{sec:Caustics_timelike}}

We will now study necessary conditions for the formation of a focal point of a hypersurface orthogonal congruence. So far, our results are rather general, being valid for any type of curve, namely space-like, time-like or null, even for curves non-affinely parameterized. To proceed, however, we will need to specify the type of curve that is being considered. Let us then first study the case of time-like curves.

Consider a congruence of time-like curves with tangent vector field $v$ with acceleration
\begin{equation}
a^{\alpha}:=\nabla_{v}v^{\alpha}\,. \label{timelike_acceleration}
\end{equation}
Without loss of generality, we will assume $v$ to be affinely parameterized, i.e.
\begin{equation}
\begin{aligned}v^{\alpha}v_{\alpha} & =-1\,,\\
v_{\alpha}a^{\alpha} & =0\,.
\end{aligned}
\end{equation}
In what follows, it will be useful to adapt to our case the $1+3$ formalism using the notation of Refs.~\cite{Ehlers,Ellis}. 
In particular, we will decompose the torsion tensor field as a sum of its orthogonal and tangent components to the congruence. To do so, we will make use of the projector $h$ onto the $N-1$ dimensional subspace orthogonal to the time-like congruence, defined as 
\begin{equation}
h_{\alpha\beta}=g_{\alpha\beta}+v_{\alpha}v_{\beta}\,,\label{eq:timelike_projector_def}
\end{equation}
with the following properties 
\begin{equation}
\begin{aligned}h_{\alpha\beta}v^{\alpha} & =0\,,\\
h_{\alpha\beta}h^{\alpha\gamma} & =h_{\beta}^{~\gamma}\,,\\
h_{\alpha\beta}h^{\alpha\beta} & =N-1.
\end{aligned}
\label{eq:timelike_projector_properties}
\end{equation}
Using Eq.~\eqref{eq:timelike_projector_def}, a straightforward calculation yields the following expression for the torsion tensor field: 
\begin{equation}
S_{\alpha\beta\gamma}=\varepsilon_{\alpha\beta}\,^{\mu}\bar{S}_{\mu\gamma}+W_{[\alpha|\gamma}v_{|\beta]}+S_{\alpha\beta}v_{\gamma}+v_{[\alpha}A_{\beta]} v_{\gamma}\,,\label{eq:timelike_torsion_decomposition}
\end{equation}
where 
\begin{equation}
\begin{aligned}\bar{S}_{\alpha\beta} &
=\frac{1}{2}\varepsilon_{\alpha\mu\nu}h_{~\beta}^{\sigma}S^{\mu\nu}{}_{\sigma}\,,\\
W_{\alpha\beta} &
=2v^{\mu}h_{~\alpha}^{\nu}h_{~\beta}^{\sigma}S_{\mu\nu\sigma}\,,\\
S_{\alpha\beta} &
=-h_{~\alpha}^{\mu}h_{~\beta}^{\nu}v^{\sigma}S_{\mu\nu\sigma}\,,\\
A_{\alpha} &
=2v^{\mu}h_{~\alpha}^{\nu}v^{\sigma}S_{\mu\nu\sigma}\,,
\end{aligned}
\label{eq:timelike_torsion_decomposition_components_def}
\end{equation}
and $\varepsilon_{\alpha\beta\gamma}:=\varepsilon_{\alpha\beta\gamma\delta}v^{\delta}$ represents the projected covariant Levi-Civita tensor. It is worth remarking that the tensors defined in Eq.~\eqref{eq:timelike_torsion_decomposition_components_def} are orthogonal to the tangent vector $v$.

Given the definitions of the expansion scalar, shear tensor and vorticity tensor of the congruence associated with the vector field $v$ as \cite{Luz}
\begin{align}
\theta &
:=\nabla_{\mu}v^{\mu}+2S_{\sigma\mu}{}^{\mu}v^{\sigma}\,,\label{eq:timelike_expansion_def}\\
\sigma_{\alpha\beta} &
:=h^{\mu}{}_{\left(\alpha\right.}h_{\left.\beta\right)}\,^{\nu}\left(\nabla_{\mu}v_{\nu}+2S_{\sigma\mu\nu}v^{\sigma}\right)\,,\\
\omega_{\alpha\beta} &
:=h^{\mu}{}_{\left[\alpha\right.}h_{\left.\beta\right]}\,^{\nu}\left(\nabla_{\mu}v_{\nu}+2S_{\sigma\mu\nu}v^{\sigma}\right)\,,\label{eq:timelike_vorticity_def}
\end{align}
we can show:
\begin{lem}
\label{Lemma:lemma_2.2_1}
Given an affinely parametrized congruence of time-like curves in a Lorentzian manifold with torsion $(M,g,S)$, a necessary and sufficient condition for the congruence to be hypersurface orthogonal is that the vorticity tensor is given by 
\begin{equation}
\omega_{\alpha\beta}=W_{\left[\alpha\beta\right]}+S_{\alpha\beta},\label{eq:timelike_hyperOrth_vorticity}
\end{equation}
where $W_{\alpha\beta}$ and $S_{\alpha\beta}$ are defined in \eqref{eq:timelike_torsion_decomposition_components_def}.
\end{lem}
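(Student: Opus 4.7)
The plan is to apply Lemma \ref{Lemma:hypersurface_orth_iff_alt} (case 1, since $v_\alpha v^\alpha=-1\neq0$), according to which hypersurface orthogonality of the congruence is equivalent to the 3-form identity
\[
3v_{[\gamma}\nabla_\beta v_{\alpha]}+v_\sigma\left(S_{\beta\alpha}{}^\sigma v_\gamma+S_{\alpha\gamma}{}^\sigma v_\beta+S_{\gamma\beta}{}^\sigma v_\alpha\right)=0,
\]
and then to recast this condition in terms of the $1+3$ kinematic quantities so that it coincides with \eqref{eq:timelike_hyperOrth_vorticity}.

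As a first step I would expand the vorticity using the projector \eqref{eq:timelike_projector_def} together with the affinely parametrized identities $v^\alpha\nabla_\beta v_\alpha=0$ and $v^\alpha a_\alpha=0$. A short direct computation gives
\[
h^\mu{}_{[\alpha}h_{\beta]}{}^\nu\nabla_\mu v_\nu=\nabla_{[\alpha}v_{\beta]}+v_{[\alpha}a_{\beta]},\qquad 2h^\mu{}_{[\alpha}h_{\beta]}{}^\nu v^\sigma S_{\sigma\mu\nu}=W_{[\alpha\beta]},
\]
the second equality being immediate from the definition of $W_{\alpha\beta}$ in \eqref{eq:timelike_torsion_decomposition_components_def}. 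Therefore $\omega_{\alpha\beta}=\nabla_{[\alpha}v_{\beta]}+v_{[\alpha}a_{\beta]}+W_{[\alpha\beta]}$, and proving the lemma is reduced to showing that the 3-form identity above is equivalent to $\nabla_{[\alpha}v_{\beta]}+v_{[\alpha}a_{\beta]}=S_{\alpha\beta}$.

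For the forward direction I would contract the 3-form identity with $v^\gamma$. The gradient part yields precisely $\nabla_{[\alpha}v_{\beta]}+v_{[\alpha}a_{\beta]}$, while for the torsion part I would feed in the decomposition \eqref{eq:timelike_torsion_decomposition} and exploit that $\bar S$, $W_{\alpha\beta}$, $S_{\alpha\beta}$ and $A_\alpha$ are all orthogonal to $v$. This quickly gives $v^\sigma S_{\beta\alpha\sigma}=S_{\alpha\beta}-v_{[\beta}A_{\alpha]}$ and $v^\gamma v^\sigma S_{\alpha\gamma\sigma}=-\tfrac{1}{2}A_\alpha$; the $A_\alpha$ contributions cancel in pairs and one is left with the desired 2-tensor relation.

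The converse direction follows from the observation that the left-hand side of the 3-form identity is of the form $3v_{[\gamma}H_{\beta\alpha]}$ with $H_{\beta\alpha}=\nabla_\beta v_\alpha+v^\sigma S_{\beta\alpha\sigma}$, and any piece of $H$ of the form $v_{[\beta}Y_{\alpha]}$ is killed by the outer antisymmetrization (because $v_\gamma v_\beta$ is symmetric in $\gamma\beta$). Hence the 3-form is completely controlled by the spatial projection of $H$, which is exactly what the $v^\gamma$ contraction recovers, so the implication is reversible. The main obstacle in the computation is purely combinatorial: since $S_{\alpha\beta\gamma}$ is antisymmetric only in its first two indices, one must carefully track which slot of the torsion tensor each $v$ is being contracted against when applying \eqref{eq:timelike_torsion_decomposition}, and distinguish between $v^\sigma S_{\beta\alpha\sigma}$, $v^\gamma S_{\alpha\gamma}{}^\sigma$ and the double contraction $v^\gamma v^\sigma S_{\alpha\gamma\sigma}$.
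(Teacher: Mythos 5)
Your proof is correct and follows essentially the same route as the paper's: both reduce the claim, via the Frobenius characterization of Section~\ref{sec:Hypersurface-orthogonal-congruence}, to the vanishing of the spatially projected antisymmetric part of $\nabla_{\beta}v_{\alpha}+v^{\sigma}S_{\beta\alpha\sigma}$, and then match this against the vorticity definition using the torsion decomposition \eqref{eq:timelike_torsion_decomposition}. The only cosmetic difference is that you enter through Lemma~\ref{Lemma:hypersurface_orth_iff_alt} and contract the resulting 3-form with $v^{\gamma}$, whereas the paper works directly from Eq.~\eqref{eq:hyperOrth_iff_general} and projects with $h^{\alpha}{}_{\mu}h^{\beta}{}_{\nu}$; these are equivalent manoeuvres.
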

\begin{proof}
First, by using Eqs.~\eqref{eq:timelike_torsion_decomposition} and \eqref{eq:timelike_torsion_decomposition_components_def}, we find that Eq.~\eqref{eq:timelike_hyperOrth_vorticity} is equivalent to
\begin{equation}
\omega_{\alpha\beta}-2v^{\mu}h_{~[\alpha}^{\nu}h_{\beta]}^{~\sigma}S_{\mu\nu\sigma}+h_{~\alpha}^{\mu}h_{~\beta}^{\nu}v^{\sigma}S_{\mu\nu\sigma}=0\,.\label{eq:timelike_hyperOrth_imply_proof}
\end{equation}
To show that Eq.~\eqref{eq:timelike_hyperOrth_imply_proof} is a necessary and sufficient condition for the congruence to admit orthogonal hypersurfaces, it suffices to prove that this equation is equivalent to Eq.~\eqref{eq:hyperOrth_iff_general} for dimensions $N>1$ and $N-m=1$. In fact, contracting Eq.~\eqref{eq:hyperOrth_iff_general} with $h_{~\mu}^{\alpha}h_{~\nu}^{\beta}$ and using Eq.~\eqref{eq:timelike_vorticity_def}, we recover Eq.~\eqref{eq:timelike_hyperOrth_imply_proof}. On the other hand, substituting Eq.~\eqref{eq:timelike_projector_def} in Eq.~\eqref{eq:timelike_vorticity_def} and using Eq.~\eqref{eq:timelike_hyperOrth_imply_proof} we find \eqref{eq:hyperOrth_iff_general}.
\end{proof}
Gathering the previous results, we are in position to prove a generalized focus theorem for time-like curves in manifolds with torsion: 
\begin{thm}
\label{Theorem:timelike_focus_theorem}
Let $\left(M,g,S\right)$ be a $N$-dimensional Lorentzian manifold endowed with a torsion tensor field $S$ such that $W_{\left(\alpha\beta\right)}=0$. Given a congruence of hypersurface orthogonal time-like curves in $\left(M,g,S\right)$, let $c$ represent an affinely parametrized fiducial curve of the congruence, with tangent vector field $v$. A focal point of the congruence will form, for a finite value of the affine parameter, if the following three conditions are satisfied:\leavevmode 
\begin{enumerate}
\item At a point along $c$, the expansion of the congruence is negative;
\item
$R_{\alpha\beta}v^{\alpha}v^{\beta}\geq\left(W_{\left[\alpha\beta\right]}+S_{\alpha\beta}\right)S^{\alpha\beta}$;
\item $\nabla_{\alpha}a^{\alpha}+A_{\alpha}a^{\alpha}\leq0$,
\end{enumerate}
where $R_{\alpha\beta}$ is the Ricci tensor.
\end{thm}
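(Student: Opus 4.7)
The plan is to adapt the classical Raychaudhuri argument to the torsion-aware setting, in three stages: (i) derive an evolution equation for $\theta$ along the fiducial curve $c$; (ii) use the hypotheses to reduce it to a scalar differential inequality of the form $\dot{\theta}\leq -\theta^{2}/(N-1)$; and (iii) conclude by the standard comparison argument that $\theta$ diverges to $-\infty$ in a bounded affine parameter interval, which is the signal of a focal point of the hypersurface-orthogonal congruence.

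First, I would take the covariant derivative of the modified expansion \eqref{eq:timelike_expansion_def} along $v$, commute the two derivatives acting on $v^{\mu}$ using the torsion-aware Ricci identity, and collect a term $-R_{\alpha\beta}v^{\alpha}v^{\beta}$ together with the extra torsion contributions coming from the explicit $2S_{\sigma\mu}{}^{\mu}v^{\sigma}$ piece. The gradient $\nabla_{\mu}v_{\nu}+2S_{\sigma\mu\nu}v^{\sigma}$ is then decomposed, via the definitions \eqref{eq:timelike_expansion_def}--\eqref{eq:timelike_vorticity_def}, into its irreducible parts $\tfrac{1}{N-1}\theta\, h_{\mu\nu}+\sigma_{\mu\nu}+\omega_{\mu\nu}$, plus a $v_{\mu}a_{\nu}$ piece that encodes the acceleration. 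Squaring this decomposition yields the familiar quadratic terms $-\theta^{2}/(N-1)$, $-\sigma_{\alpha\beta}\sigma^{\alpha\beta}$ and $+\omega_{\alpha\beta}\omega^{\alpha\beta}$, while the acceleration organizes itself into a divergence $\nabla_{\alpha}a^{\alpha}$ modulo torsion remainders.

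Second, I would substitute the torsion decomposition \eqref{eq:timelike_torsion_decomposition} everywhere, grouping residual terms by which orthogonal/tangent piece of $S$ they come from. The hypothesis $W_{(\alpha\beta)}=0$ forces $W_{\alpha\beta}=W_{[\alpha\beta]}$, killing every cross-term between $W$ and the shear. Lemma~\ref{Lemma:lemma_2.2_1} lets me eliminate $\omega_{\alpha\beta}$ via $\omega_{\alpha\beta}=W_{[\alpha\beta]}+S_{\alpha\beta}$, so that $\omega_{\alpha\beta}\omega^{\alpha\beta}$ together with the surviving algebraic torsion terms assemble into precisely $(W_{[\alpha\beta]}+S_{\alpha\beta})S^{\alpha\beta}$, which is bounded above by $R_{\alpha\beta}v^{\alpha}v^{\beta}$ via hypothesis~2. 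The remaining $A_{\alpha}$-contributions coupling to the acceleration collect into $\nabla_{\alpha}a^{\alpha}+A_{\alpha}a^{\alpha}$, which is non-positive by hypothesis~3. Putting these together yields
\begin{equation*}
v^{\mu}\nabla_{\mu}\theta\leq -\frac{\theta^{2}}{N-1}-\sigma_{\alpha\beta}\sigma^{\alpha\beta}\leq -\frac{\theta^{2}}{N-1},
\end{equation*}
where in the last step I use that $\sigma$ is spatial and symmetric, hence of non-negative squared norm.

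Finally, introducing $\psi=1/\theta$ along $c$ from a point where $\theta<0$ (available by hypothesis~1), the inequality becomes $\dot{\psi}\geq 1/(N-1)$, so $\psi$ reaches zero within an affine parameter interval of length at most $(N-1)/|\theta_{0}|$; equivalently $\theta\to -\infty$ in finite affine time, which is the usual focal-point signature. The main obstacle is the algebraic bookkeeping in the second stage: one has to track all torsion pieces $\bar{S}_{\mu\gamma},W_{\alpha\beta},S_{\alpha\beta},A_{\alpha}$ appearing in the commutator, in the redefined kinematic scalars, and in the contraction $\omega_{\alpha\beta}\omega^{\alpha\beta}$, and verify that they conspire to form exactly the scalars appearing in hypotheses~2 and~3, with the $\bar{S}$-terms (the dual spatial part) either cancelling or dropping out upon contraction against symmetric quantities and no indefinite-sign cross term between the shear and the antisymmetric torsion pieces surviving.
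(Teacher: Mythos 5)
Your proposal is correct and follows essentially the same route as the paper: the paper's proof simply quotes the generalized Raychaudhuri equation for hypersurface-orthogonal time-like congruences from the cited reference, observes that $W_{(\alpha\beta)}=0$ eliminates the $W^{(\alpha\beta)}$ and $\dot{W}_{\alpha}{}^{\alpha}$ terms, applies hypotheses 2 and 3 to obtain $\dot{\theta}\leq-\theta^{2}/(N-1)$, and integrates to force $\theta\to-\infty$ in finite affine parameter. The additional work you sketch of re-deriving that equation, including the check that $\omega_{\alpha\beta}\omega^{\alpha\beta}$ plus the torsion cross-terms collapse to $\left(W_{[\alpha\beta]}+S_{\alpha\beta}\right)S^{\alpha\beta}$ once Lemma~\ref{Lemma:lemma_2.2_1} is used, is consistent with the cited result and fills in what the paper leaves to the reference.
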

\begin{proof}
For a hypersurface orthogonal congruence, the Raychaudhuri equation \cite{Luz} can be written as 
\begin{equation}
\begin{aligned}\nabla_{v}\theta=\frac{d\theta}{d\tau}:=\dot{\theta}
&
=-R_{\alpha\beta}v^{\alpha}v^{\beta}-\left(\frac{1}{N-1}\theta^{2}+\sigma_{\alpha\beta}\sigma^{\alpha\beta}\right)+\left(W_{\left[\alpha\beta\right]}+S_{\alpha\beta}\right)S^{\alpha\beta}+\\
&
+W^{\left(\alpha\beta\right)}\left(\frac{h_{\alpha\beta}}{N-1}\theta+\sigma_{\alpha\beta}\right)+\nabla_{\alpha}a^{\alpha}+\dot{W}_{\alpha}{}^{\alpha}+A_{\alpha}a^{\alpha}\,,
\end{aligned}
\label{eq:timelike_raychaudhuri_general_proof}
\end{equation}
where $\tau$ is an affine parameter. Assuming $W_{\left(\alpha\beta\right)}=0$, Eq.~\eqref{eq:timelike_raychaudhuri_general_proof} then reduces to 
\begin{equation}
\begin{aligned}\dot{\theta} &
=-R_{\alpha\beta}v^{\alpha}v^{\beta}-\left(\frac{1}{N-1}\theta^{2}+\sigma_{\alpha\beta}\sigma^{\alpha\beta}\right)+\left(W_{\left[\alpha\beta\right]}+S_{\alpha\beta}\right)S^{\alpha\beta}+\nabla_{\alpha}a^{\alpha}+A_{\alpha}a^{\alpha}\,.\end{aligned}
\label{eq:timelike_raychaudhuri_general_proof_2}
\end{equation}
Since $\sigma_{\alpha\beta}$ is orthogonal to $v$, we have $\sigma_{\alpha\beta}\sigma^{\alpha\beta}\geq0$, therefore, under the conditions of the theorem, 
\begin{equation}
\frac{d\theta}{d\tau}\leq-\frac{\theta^{2}}{N-1}
\end{equation}
which upon integration implies 
\begin{equation}
\frac{1}{\theta\left(\tau\right)}\geq\frac{\tau}{N-1}+\frac{1}{\theta_{0}}\,,
\end{equation}
where $\theta_{0}$ is an initial value for $\theta$.

Hence, assuming the existence of a trapped surface where $\theta_{0}<0$, for a value $\tau\leq\left(N-1\right)/|\theta_{0}|$, then $\theta\to-\infty$ i.e. a focal point of the congruence will form. 
\end{proof}
\begin{rem}
Two important particular cases that are widely used in the physics literature and verify the conditions in Theorem \ref{Theorem:timelike_focus_theorem} are: Torsion tensors of the form
$S_{\alpha\beta}{}^{\sigma}=S_{\alpha\beta}v^{\sigma}$ (see e.g. \cite{Trautman,Hehl2,Hadi,Boehmer,Mena} and references therein); Torsion tensors completely anti-symmetric
$S_{\alpha\beta\sigma}=S_{\left[\alpha\beta\sigma\right]}$, where $A^{\alpha}=0$ and $W_{\alpha\beta}=W_{\left[\alpha\beta\right]}=-2S_{\alpha\beta}$ (see e.g. \cite{Visser,Hammad,Fabbri}).
\end{rem}
To close this section, we make a few comments about the constraints on the torsion tensor in the above theorem. In the presence of a general torsion field, the Raychaudhuri equation has a very different functional form when compared to the case of a manifold without torsion. This added complexity makes it very difficult to infer the sign of the term with $W^{\left(\alpha\beta\right)}$ in Eq.~\eqref{eq:timelike_raychaudhuri_general_proof} without knowing, \emph{ab initio}, the quantities $\theta$ and $\sigma_{\alpha\beta}$. Therefore, in the case when $W^{\left(\alpha\beta\right)}\neq0$ it is not possible to make a general statement regarding the formation of a focal point taking into account only the Raychaudhuri equation.


\subsection{Focusing of null congruences\label{sec:Caustics_null}}

We now consider the case of a congruence of null curves with tangent vector field $k$. Moreover, in this subsection, we will consider that the manifold is of dimension $N\geq3$. Assuming $k$ to be affinely parameterized, we have 
\begin{equation}
\begin{aligned}k^{\alpha}k_{\alpha} & =0\,,\\
\nabla_{k}k^{\alpha} & =w^{\alpha}\,,\\
k_{\alpha}w^{\alpha} & =0\,.
\end{aligned}
\label{eq:Caustics_null_tangent_vector_properties}
\end{equation}
We will follow a similar procedure to Section \ref{sec:Caustics_timelike}. However, instead of using an adaptation of the $1+3$ formalism, we shall use a $2+2$ type decomposition adapted to $N$-dimensional manifolds. To do so, we will have to find the proper projector to the subspace orthogonal to the tangent vector $k$. Since $k$ is null, it is orthogonal to itself and, as such, the object in Eq.~\eqref{eq:timelike_projector_def} does not meet the requirements for such projector. Let us then consider an auxiliary vector field $\xi$, with the following properties 
\begin{equation}
\begin{aligned}\xi^{\alpha}\xi_{\alpha} & =0\,,\\
\xi^{\alpha}k_{\alpha} & =-1.
\end{aligned}
\label{eq:Caustics_null_auxiliar_vector_properties}
\end{equation}
Then, we can define the following tensor 
\begin{equation}
\tilde{h}_{\alpha\beta}:=g_{\alpha\beta}+k_{\alpha}\xi_{\beta}+\xi_{\alpha}k_{\beta}\,,\label{eq:null_projector_def}
\end{equation}
such that
\begin{equation}
\begin{aligned}\tilde{h}_{\alpha\beta}k^{\alpha} & =0\,,\\
\tilde{h}_{\alpha\beta}\xi^{\alpha} & =0\,,\\
\tilde{h}_{\alpha\beta}\tilde{h}^{\alpha\gamma} &
=\tilde{h}_{\beta}^{\gamma}\,,\\
\tilde{h}_{\alpha\beta}\tilde{h}^{\alpha\beta} & =N-2\,,
\end{aligned}
\label{eq:null_projector_properties}
\end{equation}
where $\tilde{h}$ represents the projector onto the $N-2$ subspace orthogonal to both $k$ and $\xi$.

The vorticity of the congruence associated with $k$ on the $N-2$ subspace, orthogonal to both $k$ and $\xi$ at each point, is given by \cite{Luz}
\begin{align}
\theta &
:=\nabla_{\alpha}k^{\alpha}-2S_{\alpha\gamma}\,^{\alpha}k^{\gamma}-2\xi^{\sigma}S_{\sigma\gamma\rho}k^{\gamma}k^{\rho}+\xi^{\gamma}w_{\gamma}\,,\\
\sigma_{\alpha\beta} &
:=\tilde{h}^{\mu}{}_{\left(\alpha\right.}\tilde{h}_{\left.\beta\right)}\,^{\nu}\left(\nabla_{\mu}k_{\nu}+2S_{\sigma\mu\nu}k^{\sigma}\right)\,,\\
\omega_{\alpha\beta} &
:=\tilde{h}^{\mu}{}_{\left[\alpha\right.}\tilde{h}_{\left.\beta\right]}\,^{\nu}\left(\nabla_{\mu}k_{\nu}+2S_{\sigma\mu\nu}k^{\sigma}\right)\,.\label{eq:null_vorticity_def}
\end{align}
Using the results in Section
\ref{sec:Hypersurface-orthogonal-congruence},
we can now obtain necessary and sufficient conditions for the
congruence
associated with $k$ to be hypersurface orthogonal:
\begin{lem}
\label{Lemma:lemma_2.3_1}Given an affinely parametrized
congruence
of null curves in a Lorentzian manifold with torsion $(M,g,S)$,
of
dimension $N\geq3$, a necessary and sufficient condition for the congruence to be hypersurface orthogonal is that the vorticity
tensor
is characterized by
\begin{equation}
\omega_{\alpha\beta}=2\tilde{h}^{\mu}{}_{\left[\alpha\right.}\tilde{h}_{\left.\beta\right]}{}^{\nu}S_{\sigma\mu\nu}k^{\sigma}-\tilde{h}_{~\alpha}^{\mu}\tilde{h}_{~\beta}^{\nu}S_{\mu\nu\sigma}k^{\sigma}\,.\label{eq:null_hyperOrth_vorticity}
\end{equation}
\end{lem}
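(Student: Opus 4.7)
The plan is to parallel the proof of Lemma~\ref{Lemma:lemma_2.2_1}: reduce the stated vorticity identity \eqref{eq:null_hyperOrth_vorticity} to an equivalent projected version of the Frobenius-type condition \eqref{eq:hyperOrth_iff_general}, applied to the 1-form with components $k_{\alpha}$ with $U^{*}=\mathrm{span}(k_{\alpha},\xi_{\alpha})$ (so $N-m=2$, consistent with case 2 of Lemma~\ref{Lemma:hypersurface_orth_iff_alt}, since $k_{\alpha}k^{\alpha}=0$). The transverse distribution to be integrated is then the $(N-2)$-dimensional image of $\tilde{h}$, which is precisely the annihilator of $U^{*}$.

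First I would substitute \eqref{eq:null_vorticity_def} into \eqref{eq:null_hyperOrth_vorticity} and, using the identity $\tilde{h}^{\mu}{}_{[\alpha}\tilde{h}_{\beta]}{}^{\nu}X_{\mu\nu}=\tilde{h}^{\mu}{}_{\alpha}\tilde{h}^{\nu}{}_{\beta}X_{[\mu\nu]}$ together with the antisymmetry $S_{\mu\nu\sigma}=S_{[\mu\nu]\sigma}$, recast \eqref{eq:null_hyperOrth_vorticity} in the equivalent form
\[
\tilde{h}^{\mu}{}_{\alpha}\tilde{h}^{\nu}{}_{\beta}\bigl(\nabla_{[\mu}k_{\nu]}+S_{\mu\nu\sigma}k^{\sigma}\bigr)=0.
\]
The lemma then reduces to showing this projected equation is equivalent to \eqref{eq:hyperOrth_iff_general} with the above $U^{*}$.

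For the direct implication, Lemma~\ref{Lemma:hypersurface_orth_iff_general} applied to $u=k_{\alpha}$ yields
\[
\nabla_{[\alpha}k_{\beta]}+S_{\alpha\beta}{}^{\gamma}k_{\gamma}=k_{[\alpha}(\chi_{1})_{\beta]}+\xi_{[\alpha}(\chi_{2})_{\beta]}
\]
for some smooth 1-forms $\chi_{1},\chi_{2}$; contracting with $\tilde{h}^{\alpha}{}_{\mu}\tilde{h}^{\beta}{}_{\nu}$ and using $\tilde{h}^{\alpha}{}_{\mu}k_{\alpha}=\tilde{h}^{\alpha}{}_{\mu}\xi_{\alpha}=0$ annihilates the right-hand side, recovering the projected equation and hence \eqref{eq:null_hyperOrth_vorticity}.

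The converse is the main technical step, and I would handle it using the completeness relation $\delta^{\mu}{}_{\alpha}=\tilde{h}^{\mu}{}_{\alpha}-k^{\mu}\xi_{\alpha}-\xi^{\mu}k_{\alpha}$ that follows from \eqref{eq:null_projector_def}. Inserting two copies into $T_{\alpha\beta}:=\nabla_{[\alpha}k_{\beta]}+S_{\alpha\beta\sigma}k^{\sigma}$, the purely $\tilde{h}\otimes\tilde{h}$ component vanishes by hypothesis, and every surviving term carries an explicit $k$ or $\xi$ factor in one of the free indices. The main obstacle is the bookkeeping here: one must verify that, thanks to the nullity conditions $k^{\alpha}k_{\alpha}=\xi^{\alpha}\xi_{\alpha}=0$ and the normalisation $k^{\alpha}\xi_{\alpha}=-1$, the would-be $k\otimes k$ and $\xi\otimes\xi$ contributions collapse by antisymmetry of $T_{\alpha\beta}$, and the remaining mixed terms regroup exactly as $k_{[\alpha}(\chi_{1})_{\beta]}+\xi_{[\alpha}(\chi_{2})_{\beta]}$ with smooth $\chi_{1},\chi_{2}$. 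This is the form required by \eqref{eq:hyperOrth_iff_general}, so hypersurface orthogonality then follows from Lemma~\ref{Lemma:hypersurface_orth_iff_general}.
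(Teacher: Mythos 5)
Your proposal is correct and follows essentially the same route as the paper's own (sketched) proof: one direction by contracting the Frobenius condition \eqref{eq:hyperOrth_iff_general}, with $U^{*}=\mathrm{span}(k,\xi)$, against $\tilde{h}^{\alpha}{}_{\mu}\tilde{h}^{\beta}{}_{\nu}$, and the converse by reinserting the completeness relation implied by \eqref{eq:null_projector_def} to reconstruct the right-hand side $k_{[\alpha}(\chi_{1})_{\beta]}+\xi_{[\alpha}(\chi_{2})_{\beta]}$. You are in fact more explicit than the paper about the bookkeeping in the converse direction, which is where the actual work lies.
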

\begin{proof}
As before, we give a proof sketch without including all the algebraic manipulations. In the considered case, Eq.~\eqref{eq:hyperOrth_iff_general} reads
\begin{equation}
\nabla_{\left[\alpha\right.}k_{\left.\beta\right]}+S_{\alpha\beta}{}^{\gamma}k_{\gamma}=k_{\left[\alpha\right.}\left(\chi_{1}\right)_{\left.\beta\right]}+\xi_{\left[\alpha\right.}\left(\chi_{2}\right)_{\left.\beta\right]}\,,\label{eq:null_prop4.1_proof_1}
\end{equation}
where $\chi_{1}$ and $\chi_{2}$ are arbitrary 1-forms. Contracting Eq.~\eqref{eq:null_prop4.1_proof_1} with $\tilde{h}^{~\alpha}_{\mu}\tilde{h}^{~\beta}_{\nu}$ and using the properties \eqref{eq:null_projector_properties} and Eq.~\eqref{eq:null_vorticity_def}, it is straightforward to find Eq.~\eqref{eq:null_hyperOrth_vorticity}. On the other hand, starting from Eq.~\eqref{eq:null_hyperOrth_vorticity}, using Eqs.~\eqref{eq:null_projector_def} and \eqref{eq:null_vorticity_def} we recover \eqref{eq:null_prop4.1_proof_1}. This then shows that Eq.~\eqref{eq:null_hyperOrth_vorticity} is equivalent to Eq.~\eqref{eq:null_prop4.1_proof_1}. 
\end{proof}
We note that Eq.~\eqref{eq:null_hyperOrth_vorticity} and some subcases, derived in other contexts, can already be found in the literature (cf. Refs.~\cite{Liberati,Hammad}).

Let us now write the Raychaudhuri equation for the congruence associated with $k$, valid on the $N-2$ subspace orthogonal to both $k$ and $\xi$. Assuming, without lack of generality, that the congruence of curves associated with $k$ are affinely parameterized by a parameter $\lambda$, we find 
\begin{equation}
\begin{aligned}\nabla_{k}\theta:=\frac{d\theta}{d\lambda} &
=-R_{\alpha\beta}k^{\alpha}k^{\beta}-\left(\frac{1}{N-2}\theta^{2}+\sigma_{\alpha\beta}\sigma^{\alpha\beta}+\omega_{\alpha\beta}\omega^{\beta\alpha}\right)+\\
&
+2S_{\rho\alpha}{}^{\beta}k^{\rho}\left(\frac{\tilde{h}_{\beta}{}^{\alpha}}{N-2}\theta+\sigma_{\beta}{}^{\alpha}+\omega_{\beta}{}^{\alpha}\right)+2k^{\gamma}\nabla_{\gamma}\left(S_{\rho}k^{\rho}\right)+\nabla_{\alpha}w^{\alpha}+\\
&
+k^{\gamma}\nabla_{\gamma}\left(\xi^{\alpha}w_{\alpha}\right)+\xi^{\alpha}\xi^{\beta}w_{\alpha}w_{\beta}+2\xi^{\beta}w^{\alpha}\nabla_{\alpha}k_{\beta}+2S_{\beta\gamma\rho}k^{\beta}\xi^{\rho}w^{\gamma}\\
&
+2k^{\gamma}\nabla_{\gamma}\left(S_{\rho\alpha\sigma}\,k^{\rho}k^{\sigma}\xi^{\alpha}\right)+2S_{\alpha\sigma\beta}k^{\alpha}k^{\beta}\xi^{\gamma}\nabla_{\gamma}k^{\sigma}\,.
\end{aligned}
\label{eq:null_Raychaudhuri_general}
\end{equation}
For a completely general torsion tensor field and $k$ vector field, there seems to be no relevant way to constraint Eq.~\eqref{eq:null_Raychaudhuri_general} in order to infer the behavior of the expansion of the congruence associated with $k$.

On the other hand, one could argue, on physical grounds, that Eq.~\eqref{eq:null_Raychaudhuri_general} is only useful if $k$ describes the paths of physical particles, hence, $k$ must be tangent to metric geodesics (cf. refs.~\cite{Luz,obukhov}), i.e. $k^{\alpha}\nabla_{\alpha}^{\left(g\right)}k^{\beta}=0$, where $\nabla_{\alpha}^{\left(g\right)}$ represents the Levi-Civita connection. From the definition of $w^{\alpha}$, such choice requires $w^{\alpha}=2S^{\alpha}{}_{\gamma\sigma}k^{\gamma}k^{\sigma}$. In the literature (see e.g. \cite{Liberati,Hammad}), there has been also interest in the case when $k^{\alpha}\nabla_{\alpha}k^{\beta}=0$, which implies $w^{\alpha}=0$. Unfortunately, in both cases, for our purposes, Eq.~\eqref{eq:null_Raychaudhuri_general} still contains terms that can only be evaluated if we have full knowledge of the evolution of the metric and torsion tensors\footnote{Consider, for instance, the last terms in Eq.~\eqref{eq:null_Raychaudhuri_general} such as $\xi^{\gamma}\nabla_{\gamma}k^{\sigma}$ and $k^{\gamma}\nabla_{\gamma}\xi^{\sigma}$.}. Therefore, for a completely general torsion tensor field, there seems to be no mathematically interesting way to construct a useful focus theorem based only on the Raychaudhuri equation. Notice that, although the auxiliary vector $\xi$ is not completely determined by Eq.~\eqref{eq:Caustics_null_auxiliar_vector_properties}, we do not have enough freedom to remove all the problematic terms (whose signs we are not able to constraint \emph{ab initio}). Moreover, requiring hypersurface orthogonality of the congruence associated with $k$ (which we haven't done yet in this section) does not solve this problem.

Based on the discussion in previous paragraph, in the remaining of this section, we will consider the case when the torsion tensor field is completely anti-symmetric and the ``acceleration'' $w$
is zero: 
\begin{equation}
S_{\alpha\beta\sigma}=S_{\left[\alpha\beta\sigma\right]}\quad\text{and}\quad
w=0.
\end{equation}
These assumptions allow us to describe both the cases $k^{\alpha}\nabla_{\alpha}k^{\beta}=0$ and $k^{\alpha}\nabla_{\alpha}^{\left(g\right)}k^{\beta}=0$ at once. We have then the following result: 
\begin{thm}
\label{Theorem:null_focus_theorem} Let $\left(M,g,S\right)$ be a Lorentzian manifold of dimension $N\geq3$ endowed with a totally anti-symmetric torsion tensor field $S$. Given a congruence of hypersurface orthogonal null curves in $\left(M,g,S\right)$, let $c$ represent the fiducial curve of the congruence, with tangent vector field $k$ such that $\nabla_{k}k=0$. A focal point of the congruence will form, for a finite value of the parameter of $c$, if the following two conditions are satisfied:\leavevmode 
\begin{enumerate}
\item At a point along $c$, the expansion of the congruence becomes negative;
\item $R_{\alpha\beta}k^{\alpha}k^{\beta}+2S_{\alpha\beta\mu}S^{\alpha\beta\nu}k^{\mu}k_{\nu}\geq0$.
\end{enumerate}
\end{thm}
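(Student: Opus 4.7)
The plan is to specialize the general null Raychaudhuri equation \eqref{eq:null_Raychaudhuri_general} to the present setting, identify the many terms that drop out under the hypotheses, and then use Lemma~\ref{Lemma:lemma_2.3_1} to trade the remaining vorticity expressions for torsion contractions; after this reduction the argument reduces to the same ODE inequality that was used in the proof of Theorem~\ref{Theorem:timelike_focus_theorem}.

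First, setting $w=0$ eliminates every term in \eqref{eq:null_Raychaudhuri_general} containing $w_\alpha$ or a derivative of it. Next, total antisymmetry of $S$ removes several more terms: the trace $S_{\rho\alpha}{}^\alpha$ vanishes, so $2k^\gamma\nabla_\gamma(S_\rho k^\rho)=0$; and the double contractions $S_{\rho\alpha\sigma}k^\rho k^\sigma$ and $S_{\alpha\sigma\beta}k^\alpha k^\beta$ vanish because $S$ is antisymmetric on index pairs that are being contracted with the symmetric $kk$, which kills the last two lines of \eqref{eq:null_Raychaudhuri_general}. In the coupling $2S_{\rho\alpha}{}^\beta k^\rho(\tilde h_\beta{}^\alpha\theta/(N-2)+\sigma_\beta{}^\alpha+\omega_\beta{}^\alpha)$, the object $S_{\rho\alpha}{}^\beta k^\rho$ is antisymmetric in $\alpha,\beta$, so its contraction with the symmetric projector $\tilde h_\beta{}^\alpha$ and with the symmetric shear $\sigma_\beta{}^\alpha$ both vanish; only the $\omega_\beta{}^\alpha$ piece survives. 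After these cancellations,
\[
\frac{d\theta}{d\lambda}=-R_{\alpha\beta}k^\alpha k^\beta-\frac{\theta^2}{N-2}-\sigma_{\alpha\beta}\sigma^{\alpha\beta}-\omega_{\alpha\beta}\omega^{\beta\alpha}+2S_{\rho\alpha}{}^\beta k^\rho\,\omega_\beta{}^\alpha.
\]

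I would then invoke hypersurface orthogonality through Lemma~\ref{Lemma:lemma_2.3_1}, which under total antisymmetry collapses to $\omega_{\alpha\beta}=\tilde h_\alpha{}^\mu\tilde h_\beta{}^\nu T_{\mu\nu}$ with $T_{\mu\nu}:=S_{\mu\nu\sigma}k^\sigma$; note that $T$ is antisymmetric and satisfies $k^\mu T_{\mu\nu}=0=T_{\mu\nu}k^\nu$. Expanding $\tilde h^{\mu\nu}=g^{\mu\nu}-k^\mu\xi^\nu-\xi^\mu k^\nu$, every $\xi$-dependent correction is annihilated by a $k$-contraction against $T$, so a short algebraic computation gives
\[
-\omega_{\alpha\beta}\omega^{\beta\alpha}+2S_{\rho\alpha}{}^\beta k^\rho\,\omega_\beta{}^\alpha=-2S_{\alpha\beta\mu}S^{\alpha\beta\nu}k^\mu k_\nu,
\]
and the Raychaudhuri equation reduces to
\[
\frac{d\theta}{d\lambda}=-\bigl[R_{\alpha\beta}k^\alpha k^\beta+2S_{\alpha\beta\mu}S^{\alpha\beta\nu}k^\mu k_\nu\bigr]-\frac{\theta^2}{N-2}-\sigma_{\alpha\beta}\sigma^{\alpha\beta}.
\]

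At this point the rest is the standard ODE argument: the shear term is non-negative because $\sigma$ lives in the spacelike $(N-2)$-dimensional screen, and hypothesis (2) makes the bracketed term non-negative as well, so $d\theta/d\lambda\leq -\theta^2/(N-2)$. Integrating from the negative initial value $\theta_0$ provided by hypothesis (1) yields $1/\theta(\lambda)\geq \lambda/(N-2)+1/\theta_0$, whence $\theta\to-\infty$ for some $\lambda\leq (N-2)/|\theta_0|$, exhibiting the focal point. The main obstacle is the bookkeeping in the second step: verifying that the combination of $\omega_{\alpha\beta}\omega^{\beta\alpha}$ with the $S\cdot\omega$ coupling reduces cleanly to a multiple of $S_{\alpha\beta\mu}S^{\alpha\beta\nu}k^\mu k_\nu$ with no residual $\xi$-dependence. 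This rests on the observation that the only $k$-contractions of $S$ that appear reduce to the transverse tensor $T_{\mu\nu}$, so every $\xi$-dependent piece of the projector either cancels pairwise or is annihilated by the antisymmetry of $S$ paired with $k$; once this identity is secured, the argument parallels exactly the time-like case of Theorem~\ref{Theorem:timelike_focus_theorem}.
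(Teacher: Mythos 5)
Your proposal follows essentially the same route as the paper: specialize the general null Raychaudhuri equation \eqref{eq:null_Raychaudhuri_general} using $w=0$ and total antisymmetry of $S$, use Lemma~\ref{Lemma:lemma_2.3_1} to convert the surviving vorticity terms into contractions of $S_{\sigma\mu\nu}k^{\sigma}$ (whose $\xi$-dependent projector pieces drop because both slots of that tensor annihilate $k$), and then run the standard comparison/ODE argument exactly as in Theorem~\ref{Theorem:timelike_focus_theorem}. The paper's own proof simply asserts the reduced equation and the identification of the torsion-squared term, so your more detailed bookkeeping of which terms cancel, and why, is a faithful (indeed more explicit) version of the same argument.
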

\begin{proof}
As in Section \ref{sec:Caustics_timelike}, we will assume without loss of generality that the fiducial curve is affinely parameterized by a parameter $\lambda$. Given the premises of the theorem, the generalized Raychaudhuri equation \eqref{eq:null_Raychaudhuri_general} can be written as 
\begin{equation}
\frac{d\theta}{d\lambda}=-R_{\alpha\beta}k^{\alpha}k^{\beta}-2S_{\alpha\beta\mu}S^{\alpha\beta\nu}k^{\mu}k_{\nu}-\left(\frac{1}{N-2}\theta^{2}+\sigma_{\alpha\beta}\sigma^{\alpha\beta}\right)\,.
\end{equation}
Repeating the reasoning of the proof of Theorem \ref{Theorem:timelike_focus_theorem} we find 
\begin{equation}
\frac{1}{\theta\left(\lambda\right)}\geq\frac{\lambda}{N-2}+\frac{1}{\theta_{0}}\,,
\end{equation}
where $\theta_{0}$ is the initial value of $\theta$. Hence, assuming the existence of a trapped surface where $\theta_{0}<0$, for a value $\lambda\leq\left(N-2\right)/|\theta_{0}|$, then $\theta\to-\infty$, i.e. a focal point of the congruence will form for a finite value of $\lambda$.
\end{proof}

\section{Singularity theorems in affine theories of gravity}\label{Singularity theorems in affine theories of gravity}

The previous results concerned the formation of focal points of congruences of curves on Lorentzian manifolds. Nonetheless, the existence of such points does not necessarily imply the formation of singularities. In order to prove the existence of a singularity on the manifold, the formation of a focal point needs to be related to some kind of incompleteness of the manifold structure. To approach this problem there has been, at least, two lines of works (cf. e.g. \cite{Senovilla_1998}) in the sense that the occurrence of a focal point has been related either to the incompleteness of causal curves or, given a physical theory of gravity, to the lack of regularity of some physical scalar invariant such as the energy density. In the former approach, metric geodesics are natural curves to consider, from the geometric point of view, as they are curves of extremal length according to the metric tensor $g$. On the other hand, if one considers non-geodesic curves, it can be more feasible to follow the latter approach.


\subsection{Geodesic curves}

In Section \ref{sec:Caustics_null}, we considered the case when the torsion tensor field is totally anti-symmetric and $\nabla_{k}k=0$. In such case, the integral curves of the null vector $k$ are metric geodesics, therefore, the results of the singularity theorems of Hawking and Penrose can be applied, assuming instead the conditions of  Theorem~\ref{Theorem:null_focus_theorem}. In fact, the argument also holds for time-like geodesics\footnote{This was first noticed in \cite{Hehl1}, in the context of the Einstein-Cartan theory.}. We summarize these observations as:
\begin{prop}
\label{Lemma:HP_like_theorem}
Given a congruence of time-like (resp. null) metric geodesics in $\left(M,g,S\right)$, if the conditions of Theorem~\ref{Theorem:timelike_focus_theorem} (resp. Theorem \ref{Theorem:null_focus_theorem}) are verified, then the geodesics of the congruence are incomplete.
\end{prop}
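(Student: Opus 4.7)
The plan is to chain the focusing theorems of Section~\ref{Curve focusing in Lorentzian manifolds with torsion} with the classical length-extremality of metric geodesics, exploiting the fact that the latter is a purely metric property unaffected by the torsion $S$. Crucially, I do not need to introduce any trapped-set, Cauchy-surface, or global causality hypotheses: the proposition is strictly a corollary of the focusing theorems applied to curves that happen to extremize $g$-length.

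First, I apply Theorem~\ref{Theorem:timelike_focus_theorem} (resp. Theorem~\ref{Theorem:null_focus_theorem}) to the congruence. Its conclusion is that the expansion $\theta$ diverges to $-\infty$ at an affine parameter bounded above by $(N-1)/|\theta_0|$ (resp. $(N-2)/|\theta_0|$) along every member of the congruence. Thus on each geodesic of the congruence a focal point forms at a finite value $\tau_*$ (resp. $\lambda_*$) of the affine parameter.

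Second, I would invoke the specific hypothesis that the curves are \emph{metric} geodesics, i.e. $\nabla^{(g)}_{v}v=0$ (resp. $\nabla^{(g)}_{k}k=0$). Because metric geodesics are critical points of the $g$-arc-length functional---a variational characterization that involves only the metric and not the torsion---the classical second-variation formulas and the associated Morse-type index arguments apply verbatim (cf.\ Hawking-Ellis, Ch.~4). These tell us that past a focal point a timelike metric geodesic ceases to be locally length-maximizing, and a null metric geodesic leaves the boundary of the chronological future of its initial hypersurface. Combining these two ingredients reproduces the standard Raychaudhuri--Komar conclusion: the affine parameter along the geodesics of the congruence cannot be prolonged beyond the focal point while preserving the extremal character that defines a metric geodesic of the congruence, so the geodesics of the congruence are incomplete in the sense of that formulation.

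The principal point of care---not so much an obstacle as an interpretive remark to flag in the write-up---is that ``incomplete'' here is the Raychaudhuri--Komar notion inherited from focal-point formation at bounded affine parameter, rather than the stronger Hawking--Penrose notion requiring a trapped set together with a global causality condition and proceeding by contradiction on an individual geodesic. Once that interpretation is fixed, the proof is an essentially routine assembly of Theorems~\ref{Theorem:timelike_focus_theorem} and~\ref{Theorem:null_focus_theorem} with the classical length-extremality of metric geodesics, and the proposition admits a very short write-up consisting of little more than ``apply the focusing theorem; invoke length-extremality of $\nabla^{(g)}$-autoparallels, which is unaffected by $S$; conclude.''
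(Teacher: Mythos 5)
Your overall strategy coincides with the paper's: the only places the affine connection enters the classical singularity arguments are the Raychaudhuri equation (here replaced by the torsion-corrected Theorems~\ref{Theorem:timelike_focus_theorem} and~\ref{Theorem:null_focus_theorem}) and the variational characterization of geodesics, and the latter depends only on $g$, not on $S$. The paper's own justification is nothing more than this observation --- it explicitly defers to ``the results of the singularity theorems of Hawking and Penrose\dots assuming instead the conditions of Theorem~\ref{Theorem:null_focus_theorem},'' i.e.\ it imports the Hawking--Penrose machinery wholesale, global hypotheses included.

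The gap in your write-up is exactly where you announce that you can dispense with that machinery. The step ``the affine parameter\dots cannot be prolonged beyond the focal point while preserving the extremal character\dots so the geodesics are incomplete'' is a non sequitur: a metric geodesic is a solution of the ODE $\nabla^{(g)}_{v}v=0$, and such solutions extend through focal points without obstruction --- what is lost there is only local maximality, not extendibility. (Geodesics on a round sphere have conjugate points at finite parameter and are complete; the same phenomenon occurs for hypersurface-orthogonal timelike congruences in spatially closed cosmologies.) The classical theorems convert ``loss of maximality at finite affine parameter'' into ``incompleteness'' only via a contradiction argument that requires global structure: global hyperbolicity or a Cauchy surface guaranteeing a maximizing geodesic from the initial hypersurface to every point in its future, or the achronal-boundary argument in the null case. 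By explicitly discarding the trapped-set and causality hypotheses and then redefining ``incomplete'' to mean ``ceases to be extremal,'' you have proved a strictly weaker statement than the proposition asserts. The repair is to do what the paper implicitly does: retain the global hypotheses of the Hawking--Penrose theorems and note that their proofs go through verbatim once the strong energy condition is replaced by condition 2 of Theorem~\ref{Theorem:timelike_focus_theorem} (resp.\ Theorem~\ref{Theorem:null_focus_theorem}), since every other ingredient is purely metric.
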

The results of Proposition~\ref{Lemma:HP_like_theorem} imply, in particular, that the presence of torsion modifies the so-called \emph{strong energy condition} of General Relativity, by including the explicit contribution of torsion in conditions 2 of Theorems ~\ref{Theorem:timelike_focus_theorem} and ~\ref{Theorem:null_focus_theorem}. This observation was  already behind the work of Trautman \cite{Trautman} (see also \cite{Stewart, Kop,Hehl2}) where it was found, in a particular model of the Einstein-Cartan theory, that the presence of torsion could prevent the formation of singularities due to the violation of the modified strong energy condition.


\subsection{Non-geodesic curves\label{sec:Singularity_theorems}}

An affine theory of gravity is characterized by field equations that relate the geometry of the manifold with the matter fields that permeate it. In general, at least one set of these field equations can be written in the form 
\begin{equation}
f\left(g,S,T,\text{other fields}\right)=0\,,\label{eq:Field_equations_general}
\end{equation}
where $f$ is some functional, $g$ represents the metric tensor, $S$ the torsion tensor and $T$ is a rank-2 tensor constructed from the matter fields which we call the \emph{stress-energy tensor} and whose divergence verifies 
\begin{equation}
\nabla_{\beta}T^{\alpha\beta}=\Psi^{\alpha}\,,\label{eq:EM_tensor_divergence_definition}
\end{equation}
for some vector field $\Psi$. 

Before we write the main result of this subsection, let us address the \emph{other fields} part in Eq.~\eqref{eq:Field_equations_general}. Here, we are thinking of theories like, for instance, scalar-tensor theories with a null torsion tensor field, where an extra scalar field non-minimally coupled to the geometry is considered and appears directly in the field equations. In those theories, it is still possible to write a metric stress-energy tensor for the matter fields so that, in the so-called Jordan frame, Eq.~\eqref{eq:EM_tensor_divergence_definition} is verified for an identically null $\Psi$ (cf. e.g. Ref.~\cite{Santiago}).

Considering a gravity theory on a Lorentzian manifold, which we now call {\em space-time}, we use the following definition:\footnote{In his review \cite{Senovilla_1998}, Senovilla calls {\em matter singularity} to a singularity in the energy density term of the stress-energy tensor.}
\begin{defn}
\label{Definition:Physical_singularity} A space-time $(M,g,S)$ is said to be physically singular if there exists at least one frame (i.e. an observer) for which the stress-energy tensor $T$ is not well defined in $(M,g,S)$, for any coordinate system. 
\end{defn}
We remark that, classically, a stress-energy tensor consists of a rank-2 tensor with real entries. So, if independently of the coordinate system, some components of $T$ either diverge or take complex values, this would correspond to a non-admissible physical object, in the sense that an observer is no longer able to make predictions using that physical theory.

Now, large classes of physical particles follow accelerated time-like curves and the effects of the acceleration might not be negligible in the extreme gravitational field regime. From a physical point of view, a congruence of curves represents the paths of matter particles, as such, a focal point of the congruence represents the focusing of a matter field. It is then natural to connect the formation of a focal point with the stress-energy tensor that characterizes the matter fluid that permeates the space-time. A prototype of this idea was first used by Raychaudhuri \cite{Raychaudhuri_1955} and, independently, by Komar \cite{Komar_1956}, and later formalized by Senovilla \cite{Senovilla_1998} for the case of congruences of time-like geodesics in torsionless space-times, in the context of GR. Here, we extend the results of Ref.~\cite{Senovilla_1998} to time-like congruences in space-times endowed with a torsion tensor field, in the context of affine theories of gravity.

For a congruence of time-like curves with tangent vector $v$, we can decompose $T$ as 
\begin{equation}
\begin{aligned}T_{\alpha\beta} &
=\rho\,v_{\alpha}v_{\beta}+p\,h_{\alpha\beta}+q_{1\alpha}v_{\beta}+v_{\alpha}q_{2\beta}+\pi_{\alpha\beta}+m_{\alpha\beta}\end{aligned}
\,,\label{eq:Energy_momentum_tensor_decomposition_general-2}
\end{equation}
with 
\begin{equation}
\begin{aligned}\rho & =v^{\mu}v^{\nu}T_{\mu\nu}\,, &
\hspace{1.5cm} & & q_{1\alpha} &
=-h_{\alpha}^{~\mu}v^{\nu}T_{\mu\nu}\,,\\
p & =\frac{1}{N-1}h^{\mu\nu}T_{\mu\nu}\,, & & & q_{2\alpha} &
=-v^{\mu}h_{\alpha}^{~\nu}T_{\mu\nu}\,,\\
\pi_{\alpha\beta} &
=\left(h^{\mu}_{\,\left(\alpha\right.}h^{\nu}_{\left.\beta\right)}{}-\frac{h_{\alpha\beta}}{N-1}h^{\mu\nu}\right)T_{\mu\nu}\,,
& & & m_{\alpha\beta} &
=h_{\,\left[\alpha\right.}^{\mu}h^{\nu}_{\left.\beta\right]}{}T_{\mu\nu}\,,
\end{aligned}
\label{eq:Energy_momentum_tensor_decomposition_quantities-1}
\end{equation}
where $N\geq2$ represents the dimension of the space-time. 

Before proceeding, notice that the components in Eq.~\eqref{eq:Energy_momentum_tensor_decomposition_quantities-1} are defined covariantly such that, if it is found that some component is not defined at some point of the manifold, this indeed means that an observer co-moving with the fiducial curve of the congruence is unable to define a stress-energy tensor (in contrast with just an ill choice of coordinates).

Now, given Definition \ref{Definition:Physical_singularity}, we obtain the following result:
\begin{thm}
\label{Theorem:singularities} Consider a space-time $(M,g,S)$ in the context of any affine theory of gravity satisfying the conditions of Theorem \ref{Theorem:timelike_focus_theorem}, the conservation of energy $v^{\alpha}\nabla_{\beta}T_{~\alpha}^{\beta}=0$ and containing either
\begin{enumerate}
\item A fluid with $T_{\alpha\beta}=\rho v_{\alpha}v_{\beta}+p\,h_{\alpha\beta}$ and $p=k\rho^{\gamma}$, where $k\in\mathbb{R}_{0}^+$, $\gamma\in\mathbb{R}\setminus\{0\}$ and, at some instant, $\tau=\tau_0$, $\theta(\tau_0)<0$ and $\rho(\tau_0)>0$;
\item A scalar field with $T_{\alpha\beta}=\left(\nabla_{\alpha}\psi\right)\left(\nabla_{\beta}\psi\right)-\frac{1}{2}g_{\alpha\beta}\left(\nabla_{\mu}\psi\right)\left(\nabla^{\mu}\psi\right)-g_{\alpha\beta}V\left(\psi\right),$ admitting level surfaces, such that its gradient $\nabla_{\alpha}\psi$ is time-like,
\end{enumerate}
then, the space-time will become physically singular at, or before, the formation of a focal point. 
\end{thm}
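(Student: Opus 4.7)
I would first invoke Theorem~\ref{Theorem:timelike_focus_theorem}: its hypotheses are carried over in the statement, so the hypersurface-orthogonal time-like congruence develops a focal point at some finite affine parameter $\tau_{\ast}$ with $\theta(\tau)\to-\infty$ as $\tau\to\tau_{\ast}^{-}$, and the bound $1/\theta(\tau)\ge \tau/(N-1)+1/\theta_{0}$ from that proof implies $\int_{\tau_{0}}^{\tau_{\ast}}(-\theta(\tau'))\,d\tau' = +\infty$. The plan is then to use the conservation equation, together with the covariant $1{+}3$ decomposition (\ref{eq:Energy_momentum_tensor_decomposition_general-2})--(\ref{eq:Energy_momentum_tensor_decomposition_quantities-1}) of $T$ in the frame comoving with $v$, to show that the scalar $\rho = v^{\mu}v^{\nu}T_{\mu\nu}$ ceases to be a finite real number at some $\tau\le\tau_{\ast}$. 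By Definition~\ref{Definition:Physical_singularity} this is exactly the conclusion sought.

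Both cases first reduce to a perfect-fluid form $T_{\alpha\beta} = \rho\, v_{\alpha}v_{\beta} + p\,h_{\alpha\beta}$ in which $q_{1}=q_{2}=0$ and $\pi_{\alpha\beta}=m_{\alpha\beta}=0$. In Case 1 this is built in; in Case 2 one chooses $v^{\alpha}$ collinear with the normalized, time-like $\nabla^{\alpha}\psi$, which is legitimate because $\nabla\psi$ is time-like and its level surfaces supply the hypersurface orthogonality required by Theorem~\ref{Theorem:timelike_focus_theorem}. A direct computation in that adapted frame yields $\rho = \frac{1}{2}\phi^{2} + V(\psi)$ and $p = \frac{1}{2}\phi^{2} - V(\psi)$ with $\phi^{2}:=-\nabla_{\mu}\psi\nabla^{\mu}\psi > 0$, so $\rho+p = \phi^{2}\ge 0$. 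Projecting the conservation equation $v^{\alpha}\nabla_{\beta}T^{\beta}{}_{\alpha}=0$ along $v$, using $v_{\alpha}a^{\alpha}=0$ and (\ref{eq:timelike_expansion_def}), gives the exact continuity equation
\begin{equation*}
\dot\rho + (\rho+p)\bigl(\theta - 2 S_{\sigma\mu}{}^{\mu}v^{\sigma}\bigr) = 0,
\end{equation*}
which is the common starting point for the two cases.

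For Case 1 I would substitute $p=k\rho^{\gamma}$. Since $\rho(\tau_{0})>0$ and once $-\theta$ has grown enough to dominate the bounded torsion trace the equation gives $\dot\rho>0$, $\rho$ stays positive (hence $\rho^{\gamma}$ stays real) and strictly increases. Separating variables,
\begin{equation*}
\int_{\rho_{0}}^{\rho(\tau)}\frac{d\rho'}{\rho' + k\,(\rho')^{\gamma}} = \int_{\tau_{0}}^{\tau}\bigl(-\theta(\tau') + 2 S_{\sigma\mu}{}^{\mu}v^{\sigma}\bigr)\,d\tau'.
\end{equation*}
The torsion trace is smooth, hence bounded on any compact subinterval of $[\tau_{0},\tau_{\ast}]$, while the $-\theta$ integral diverges at $\tau_{\ast}$, so the right-hand side is unbounded. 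If the integrand on the left fails to be integrable at $+\infty$ (which happens for $\gamma\le 1$), the equality forces $\rho\to\infty$ as $\tau\to\tau_{\ast}$; if it is integrable ($\gamma>1$), the left-hand side remains bounded even as $\rho\to\infty$, so the equality can only persist if $\rho$ diverges at some $\tau_{\sharp}<\tau_{\ast}$. In either subcase $\rho$ leaves $\mathbb{R}$ at or before the focal point, triggering Definition~\ref{Definition:Physical_singularity}.

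For Case 2 the same continuity equation reads $\dot\rho = -\phi^{2}\bigl(\theta - 2 S_{\sigma\mu}{}^{\mu}v^{\sigma}\bigr)$, so on any subinterval of $[\tau_{0},\tau_{\ast})$ on which $\phi^{2}$ stays bounded below by a positive constant the divergence of $\int(-\theta)\,d\tau$ transfers to $\rho$ and yields $\rho\to\infty$ at or before $\tau_{\ast}$. The main technical obstacle I anticipate is precisely the control of $\phi^{2}$ in this case: \emph{a priori} it could either decay fast enough to offset the divergence of $-\theta$, or vanish outright at some $\tau_{\sharp}<\tau_{\ast}$, in which event the frame-defining hypothesis that $\nabla\psi$ is time-like already fails strictly before the focal point, which one would then interpret as an even earlier singularity of the observer frame. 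Ruling out the first possibility and interpreting the second correctly is where one must carefully combine smoothness of $g$, $S$ and $\psi$ with the finiteness of $\tau_{\ast}$ inherited from Theorem~\ref{Theorem:timelike_focus_theorem}.
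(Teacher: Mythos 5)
Your Case 1 is sound and is essentially the paper's argument in different packaging: the paper integrates $\nabla_{v}\rho+\theta\left(\rho+k\rho^{\gamma}\right)=0$ in closed form, obtaining $\left|\rho^{1-\gamma}+k\right|=\left(\rho_{0}^{1-\gamma}+k\right)e^{-\left(1-\gamma\right)\int\theta}$ for $\gamma\neq1$ (and the analogous exponential for $\gamma=1$), and reads off the blow-up from $\int_{\tau_{0}}^{\tau}\theta\,dt\to-\infty$; your separation-of-variables form, with the dichotomy governed by the integrability of $1/\left(\rho'+k\rho'^{\gamma}\right)$ at $+\infty$, encodes the same information and even recovers Corollary~\ref{Proposition:EC_perf_fluid_non_existence_caustic} ($\gamma>1$ diverges strictly before the focal point). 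Two small points: the torsion-trace term $2S_{\sigma\mu}{}^{\mu}v^{\sigma}$ you carry along equals $W_{\alpha}{}^{\alpha}=g^{\alpha\beta}W_{(\alpha\beta)}$, which vanishes under the hypothesis $W_{\left(\alpha\beta\right)}=0$ of Theorem~\ref{Theorem:timelike_focus_theorem}, so there is nothing for $-\theta$ to ``dominate''; and since $\dot{\theta}\leq-\theta^{2}/\left(N-1\right)$ makes $\theta$ non-increasing from $\theta_{0}<0$, you have $\theta<0$ and hence $\dot{\rho}\geq0$ on all of $[\tau_{0},\tau_{f})$, not just eventually.

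Case 2, however, you leave genuinely open, and the obstacle you flag --- that $\rho+p=\phi^{2}=-\nabla_{\mu}\psi\nabla^{\mu}\psi$ might decay fast enough to offset the divergence of $-\theta$ --- is precisely what the paper's proof eliminates at the outset. The paper does not merely take $v$ collinear with $\nabla\psi$: it sets $v^{\alpha}=\nabla^{\alpha}\psi$ and simultaneously imposes the affine normalization $v^{\alpha}v_{\alpha}=-1$, which forces $\dot{\psi}=v^{\alpha}\nabla_{\alpha}\psi=-1$ and hence $\rho=\tfrac{1}{2}+V\left(\psi\right)$, $p=\tfrac{1}{2}-V\left(\psi\right)$, so $\rho+p\equiv1$. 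The continuity equation then integrates immediately to $\rho\left(\tau\right)=\rho\left(\tau_{0}\right)+\left|\int_{\tau_{0}}^{\tau}\theta\,dt\right|$, which diverges at the focal point with no further estimate needed; that is the step missing from your write-up. (Your unease is not misplaced: taking $v^{\alpha}=\nabla^{\alpha}\psi$ together with $v^{\alpha}v_{\alpha}=-1$ amounts to assuming $\nabla_{\mu}\psi\nabla^{\mu}\psi=-1$, i.e. that $\psi$ has unit time-like gradient, which is stronger than the stated hypothesis that $\nabla\psi$ is merely time-like. But to reach the paper's conclusion you must either adopt that normalization or supply the positive lower bound on $\phi^{2}$ that you correctly identify as the crux; as it stands, your Case 2 does not prove the theorem.)
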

\begin{cor}
\label{Proposition:EC_perf_fluid_non_existence_caustic}
In case 1, if the fluid has $\gamma>1$, then the space-time becomes physically singular before the formation of a focal point. 
\end{cor}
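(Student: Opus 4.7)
The plan is to extend the argument already behind Theorem~\ref{Theorem:singularities} in case~1, sharpening the timing so that the blow-up of the energy density occurs strictly before the focal-point time $\tau_{*}$ guaranteed by Theorem~\ref{Theorem:timelike_focus_theorem}. First I would contract the conservation law $v^{\alpha}\nabla_{\beta}T^{\beta}{}_{\alpha}=0$ along the fiducial curve using $T_{\alpha\beta}=\rho v_{\alpha}v_{\beta}+p\,h_{\alpha\beta}$ and the equation of state $p=k\rho^{\gamma}$. A short calculation, using $\nabla_{\beta}h^{\beta}{}_{\alpha}=(\nabla_{\beta}v^{\beta})v_{\alpha}+a_{\alpha}$ together with $v^{\alpha}a_{\alpha}=0$, reduces this to the single ordinary differential equation
\begin{equation}
\dot{\rho} = -(\rho+k\rho^{\gamma})\bigl(\theta - 2 S_{\sigma\mu}{}^{\mu}v^{\sigma}\bigr),
\end{equation}
where the bracket on the right is just $\nabla_{\mu}v^{\mu}$ rewritten in terms of the torsionful expansion of \eqref{eq:timelike_expansion_def}.

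Next I would separate variables and integrate from the initial instant $\tau_{0}$, at which $\rho(\tau_{0})=\rho_{0}>0$ and $\theta(\tau_{0})<0$, to an arbitrary $\tau<\tau_{*}$, obtaining
\begin{equation}
\int_{\rho_{0}}^{\rho(\tau)}\frac{d\rho'}{\rho'+k\rho'^{\gamma}}=-\int_{\tau_{0}}^{\tau}\bigl(\theta(s) - 2S_{\sigma\mu}{}^{\mu}v^{\sigma}(s)\bigr)\,ds.
\end{equation}
The decisive observation, absent from the general statement of Theorem~\ref{Theorem:singularities}, is that for $\gamma>1$ and $k>0$ the integrand on the left decays like $\rho'^{-\gamma}$ as $\rho'\to\infty$, so that $G_{\infty}:=\int_{\rho_{0}}^{\infty}(\rho'+k\rho'^{\gamma})^{-1}d\rho'$ is \emph{finite}. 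At the same time, the estimate $\theta(\tau)\leq -(N-1)/(\tau_{*}-\tau)$ extracted from the proof of Theorem~\ref{Theorem:timelike_focus_theorem} makes $-\int_{\tau_{0}}^{\tau}\theta(s)\,ds$ diverge to $+\infty$ as $\tau\uparrow\tau_{*}$. Granted that the torsion trace contributes only a bounded correction on the compact interval $[\tau_{0},\tau_{*}]$, the right-hand side must eventually exceed the finite ceiling $G_{\infty}$, forcing $\rho(\tau_{1})=+\infty$ at some $\tau_{1}<\tau_{*}$. Invoking Definition~\ref{Definition:Physical_singularity}, the component $\rho=v^{\mu}v^{\nu}T_{\mu\nu}$ is then ill defined at $\tau_{1}$ for every choice of coordinates, yielding a physical singularity strictly before the formation of the focal point.

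The step I expect to be the main obstacle is the control of the torsion trace $S_{\sigma\mu}{}^{\mu}v^{\sigma}$ on $[\tau_{0},\tau_{*}]$: one has to argue that it remains bounded along the fiducial curve up to $\tau_{*}$, since if it were to blow up on its own the comparison between the two sides of the integrated equation would collapse and, moreover, the hypotheses of the underlying focusing theorem would themselves need to be re-examined. A secondary point worth flagging is the borderline case $k=0$ (pure dust), where $G_{\infty}$ diverges and the argument only reproduces the coincidence of the physical singularity with the focal point, showing that the implicit hypothesis $k>0$ is sharp; the same discussion explains why the endpoint $\gamma=1$ must be excluded, as $\int^{\infty}d\rho'/(\rho'+k\rho')$ still diverges logarithmically.
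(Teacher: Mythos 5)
Your argument is correct and is essentially the paper's own proof in different clothing: the paper integrates the same separable equation $\nabla_{v}\rho+\theta(\rho+k\rho^{\gamma})=0$ in closed form, obtaining $\left|\rho^{1-\gamma}+k\right|=\left(\rho_{0}^{1-\gamma}+k\right)e^{-(1-\gamma)\int\theta\,dt}$, and observes that for $\gamma>1$, $k>0$ the strictly decreasing quantity $\rho^{1-\gamma}$ must reach $0$ (i.e.\ $\rho\to+\infty$) at some $\tau_{a}<\tau_{f}$ because it would otherwise be driven to the impossible value $-k<0$ at $\tau_{f}$ --- which is precisely your statement that $\int_{\rho_{0}}^{\infty}d\rho'/(\rho'+k\rho'^{\gamma})$ is finite while $-\int_{\tau_{0}}^{\tau}\theta\,ds$ diverges as $\tau\uparrow\tau_{f}$. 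The obstacle you flag dissolves: under the standing hypothesis $W_{(\alpha\beta)}=0$ of Theorem~\ref{Theorem:timelike_focus_theorem}, the decomposition \eqref{eq:timelike_torsion_decomposition_components_def} gives $2S_{\sigma\mu}{}^{\mu}v^{\sigma}=W^{\alpha}{}_{\alpha}=g^{\alpha\beta}W_{(\alpha\beta)}=0$, so $\nabla_{\mu}v^{\mu}=\theta$ and your equation reduces exactly to Eq.~\eqref{eq:conservation_timelike_perf_fluid}; your closing remark that $k>0$ is genuinely needed (for $k=0$ the blow-up of $\rho$ coincides with, rather than precedes, the focal point) is a correct and worthwhile sharpening of the hypothesis $k\in\mathbb{R}_{0}^{+}$.
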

\begin{cor}
Theorem \ref{Theorem:singularities} applies, in particular, to any minimally coupled affine theory of gravity without torsion, to scalar-tensor theories written in the Jordan frame in manifolds without torsion and to the Einstein-Cartan theory for a torsion given by $S_{\alpha\beta\gamma}=S_{\alpha\beta}v_{\gamma}+v_{[\alpha}A_{\beta]}v_{\gamma}$.
\end{cor}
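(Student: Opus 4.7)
The plan is to verify, for each of the three listed settings, that the structural hypotheses of Theorem~\ref{Theorem:singularities} that depend on the theory of gravity---namely the torsion condition $W_{(\alpha\beta)}=0$ inherited from Theorem~\ref{Theorem:timelike_focus_theorem} and the conservation law $v^{\alpha}\nabla_{\beta}T^{\beta}{}_{\alpha}=0$---are automatically satisfied, so that the theorem becomes immediately applicable once one adjoins the remaining physical hypotheses on the matter content and on the initial expansion/Ricci inequalities.

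For the first two cases I would observe that $S\equiv 0$, whence the defining formulas \eqref{eq:timelike_torsion_decomposition_components_def} give $W_{\alpha\beta}\equiv 0$ identically, and in particular $W_{(\alpha\beta)}=0$. For a minimally coupled matter action, diffeomorphism invariance produces the Noether-type identity $\nabla_{\beta}T^{\alpha\beta}=0$, whose contraction with $v^{\alpha}$ is precisely the energy-conservation hypothesis of Theorem~\ref{Theorem:singularities}. In a scalar-tensor theory written in the Jordan frame, the matter tensor is again separately conserved, i.e.\ $\Psi^{\alpha}\equiv 0$ in \eqref{eq:EM_tensor_divergence_definition}---this being the defining feature of that frame, recalled in the paragraph following \eqref{eq:EM_tensor_divergence_definition} (cf.\ Ref.~\cite{Santiago})---so the required conservation follows by the same contraction.

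For the Einstein-Cartan case with the prescribed torsion, I would first substitute $S_{\alpha\beta\gamma}=S_{\alpha\beta}v_{\gamma}+v_{[\alpha}A_{\beta]}v_{\gamma}$ into \eqref{eq:timelike_torsion_decomposition_components_def}: only the irreducible pieces $S_{\alpha\beta}$ and $A_{\alpha}$ appearing in \eqref{eq:timelike_torsion_decomposition} are switched on, so $\bar{S}_{\mu\gamma}$ and $W_{\alpha\beta}$ both vanish and $W_{(\alpha\beta)}=0$ holds. For the conservation I would invoke the second Cartan-Bianchi identity of Einstein-Cartan theory together with the algebraic Cartan field equation relating torsion to the canonical spin current: for a perfect fluid obeying a barotropic equation of state, or a scalar field with potential, the spin current vanishes and the identity reduces to an equation of the form $\nabla_{\beta}T^{\alpha\beta}=F^{\alpha}(S,T)$, whose right-hand side one then checks to be orthogonal to $v^{\alpha}$ once the $v$-adapted form of $S_{\alpha\beta\gamma}$ is inserted.

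The main obstacle I anticipate is precisely this last step: one must verify that the combination of the special algebraic form imposed on $S_{\alpha\beta\gamma}$ with the spin-free character of the two matter sectors considered in Theorem~\ref{Theorem:singularities} kills any residual contribution of the torsion source along $v^{\alpha}$, so that contraction of $\nabla_{\beta}T^{\alpha\beta}=F^{\alpha}(S,T)$ with $v_{\alpha}$ delivers a clean zero. Once that is in place, all theory-level hypotheses of Theorem~\ref{Theorem:singularities} are automatically satisfied in each of the three settings, and the corollary follows.
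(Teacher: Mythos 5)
Your verification of the first two cases is correct and is essentially what the paper intends: with $S\equiv 0$ all the torsion components in \eqref{eq:timelike_torsion_decomposition_components_def} vanish, so $W_{(\alpha\beta)}=0$ trivially, and the conservation hypothesis $v^{\alpha}\nabla_{\beta}T^{\beta}{}_{\alpha}=0$ holds by minimal coupling, respectively by the Jordan-frame property ($\Psi^{\alpha}\equiv 0$) invoked after \eqref{eq:EM_tensor_divergence_definition}. Your identification of $W_{\alpha\beta}=0$ for the Einstein--Cartan torsion $S_{\alpha\beta\gamma}=S_{\alpha\beta}v_{\gamma}+v_{[\alpha}A_{\beta]}v_{\gamma}$ is also right, since $h^{\sigma}{}_{\beta}v_{\sigma}=0$ kills every term.

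The gap is in the Einstein--Cartan conservation step, and it is twofold. First, your appeal to a vanishing spin current is self-defeating: if the intrinsic hypermomentum $\Delta^{\alpha\beta\gamma}$ vanishes, the Cartan field equation \eqref{eq:EC_field equations_2} forces $S_{\alpha\beta\gamma}=0$, which collapses the third case of the corollary into the first; the whole point is that the torsion is nonzero and sourced by nonzero spin (e.g.\ a Weyssenhoff-type fluid whose metric stress-energy still has the perfect-fluid form). Second, the decisive computation is exactly the one you defer. The correct route is to start from the Einstein--Cartan conservation law \eqref{eq:EC_Conservation_equations}, contract with $v^{\alpha}$, and use that $S_{\alpha\beta}$ and $A_{\alpha}$ are orthogonal to $v$ together with the antisymmetry of the Riemann tensor in its first index pair (metric compatibility): the trace terms $v^{\alpha}S_{\alpha\mu}{}^{\mu}$ and the curvature term then drop out, and the only survivor is $2v^{\alpha}S_{\alpha\mu\nu}T^{\nu\mu}=q_{2}^{\alpha}A_{\alpha}$. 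So $v^{\alpha}\nabla_{\beta}T^{\beta}{}_{\alpha}$ is \emph{not} identically zero for a general $T$; it reduces to \eqref{new-rel}, and one must additionally observe that $q_{2\alpha}=0$ for both matter sectors of Theorem \ref{Theorem:singularities} (the perfect fluid by hypothesis, the scalar field because $v^{\alpha}\propto\nabla^{\alpha}\psi$ makes the heat flux vanish), so that \eqref{new-rel} reduces to \eqref{eq:conservation_timelike_perf_fluid} and the proof of Theorem \ref{Theorem:singularities} goes through verbatim. This is precisely the calculation the paper performs in its Einstein--Cartan section; without it the third clause of the corollary is asserted rather than proved.
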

\begin{proof}
Using Eqs.~\eqref{eq:timelike_projector_def} and \eqref{eq:Energy_momentum_tensor_decomposition_general-2}, then $v^{\alpha}\nabla_{\beta}T_{~\alpha}^{\beta}=0$ can be written as
\begin{align}
\nabla_{v}\rho+\theta\left(\rho+p\right)+\pi^{\alpha\beta}\sigma_{\alpha\beta}-m^{\alpha\beta}\omega_{\alpha\beta}+q_{1}^{\alpha}a_{\alpha}+\nabla_{\alpha}q_{2}^{\alpha}
& =0\,,\label{eq:Conservation_timelike}
\end{align}
where $\theta$, $\sigma_{\alpha\beta}$ and $\omega_{\alpha\beta}$ are given by Eqs.~\eqref{eq:timelike_expansion_def} - \eqref{eq:timelike_vorticity_def}.
\\\\
(i)~Considering the particular case of a fluid characterized by a stress-energy tensor whose non-zero components in the decomposition \eqref{eq:Energy_momentum_tensor_decomposition_general-2} are $\rho$ and $p$, then Eq.~\eqref{eq:Conservation_timelike} reduces to 
\begin{align}
\nabla_{v}\rho+\theta\left(\rho+p\right) &
=0.\label{eq:conservation_timelike_perf_fluid}
\end{align}
Assuming $p=k\rho^{\gamma}$, with $k\in\mathbb{R}_{0}^+$ and $\gamma\in\mathbb{R}\setminus\{0\}$, we may integrate Eq.~\eqref{eq:conservation_timelike_perf_fluid} along the fiducial curve of the congruence to find 
\begin{equation}
\begin{cases}
\left|\rho \right|=\left|\rho_{0} \right|e^{-\left(1+k\right)\int_{\tau_{0}}^{\tau}\theta\left(t\right)dt}\,,
& \text{for \ensuremath{\gamma=1}}\\
\left|\rho{}^{1-\gamma} +k \right|=\left(\rho_{0}^{1-\gamma}+k \right)\,e^{-\left(1-\gamma\right)\int_{\tau_{0}}^{\tau}\theta\left(t\right)dt}\,,
& \text{for \ensuremath{\gamma\neq1}},
\end{cases}\label{eq:timelike_perf_fluid_rho_expressions}
\end{equation}
where $\rho\equiv\rho\left(c\left(\tau\right)\right)$ and
$\rho_{0}\equiv\rho\left(c\left(\tau_{0}\right)\right)$.

Assuming that the conditions of the focus Theorem \ref{Theorem:timelike_focus_theorem} are verified, depending on the values of $\gamma$, we get different types of behavior. If $\gamma\le1$, taking the limit of Eq.~\eqref{eq:timelike_perf_fluid_rho_expressions}, we find 
\begin{equation}
\lim_{\tau\to\tau_{f}^{-}}\left|\rho\right|=+\infty\text{ , for
\ensuremath{\gamma\le1}},\label{eq:Perfect_fluid_limit_rho_gamma_leq_1-1}
\end{equation}
i.e., the space-time becomes physically singular at the formation of the focal point. On the other hand, if $\gamma>1$ and assuming $\rho_{0}>0$, from Eq.~\eqref{eq:timelike_perf_fluid_rho_expressions} we find that $\rho^{1-\gamma}$ is a strictly decreasing function and 
\begin{equation}
\lim_{\tau\to\tau_{a}^{-}}\frac{1}{\rho^{\gamma-1}}=0,
\end{equation}
for some $\tau_{a}<\tau_{f}$. So, before the formation of a focal point we have that $\rho$ will diverge, hence, the space-time becomes physically singular.  This proves part 1 of the theorem. As a side remark, notice that when $\gamma\leq1$, part 1 holds for any sign of $\rho_0$.
\\\\
(ii) Consider now a real scalar field $\psi$ characterized by a stress-energy tensor of the form 
\begin{equation}
{T}_{\alpha\beta}=\left(\nabla_{\alpha}\psi\right)\left(\nabla_{\beta}\psi\right)-\frac{1}{2}g_{\alpha\beta}\left(\nabla_{\mu}\psi\right)\left(\nabla^{\mu}\psi\right)-g_{\alpha\beta}V(\psi),\label{eq:SF_energy_momentum_tensor_general-1}
\end{equation}
where $V(\psi)$ represents a potential functional. Using the definitions in \eqref{eq:Energy_momentum_tensor_decomposition_quantities-1} and defining $\dot{\psi}:=v^{\alpha}\nabla_{\alpha}\psi$, we find
\begin{align}
\rho &
=\dot{\psi}^{2}+\frac{1}{2}\left(\nabla_{\mu}\psi\right)\left(\nabla^{\mu}\psi\right)+V\left(\psi\right)\,,\label{eq:SF_energy_density_general-1}\\
p &
=\frac{1}{N-1}\rho+\frac{N-2}{2\left(N-1\right)}\left(\nabla_{\mu}\psi\right)\left(\nabla^{\mu}\psi\right)-\frac{N}{N-1}V\left(\psi\right)\,,\label{eq:SF_pressure_general-1}\\
q_{\alpha} &
=q_{1\alpha}=q_{2\alpha}=-h_{~\alpha}^{\mu}\left(\nabla_{\mu}\psi\right)\dot{\psi}\,,\label{eq:SF_heat_flow_general-1}\\
\pi_{\alpha\beta} &
=\left[h_{~(\alpha}^{\mu}h_{\beta)}^{~\nu}-\frac{h_{\alpha\beta}}{N-1}h^{\mu\nu}\right]\left(\nabla_{\mu}\psi\right)\left(\nabla_{\nu}\psi\right)\,,\label{eq:SF_anisotropic_pressure_general-1}\\m_{\alpha\beta} & =0\,.
\end{align}
Let us now impose that the scalar field admits level surfaces such that its gradient, $\nabla^{\alpha}\psi$, is time-like. Since the choice of $v$ is arbitrary, we can impose without loss of generality that $v^{\alpha}$ is given by $\nabla^{\alpha}\psi$, so that, $h_{~\alpha}^{\beta}\nabla_{\beta}\psi=0$. This, then, implies that the quantities $q_\alpha$ and $\pi_{\alpha\beta}$ are null. Moreover,
\begin{align}
\rho &
=\frac{1}{2}\dot{\psi}^{2}+V\left(\psi\right)=\frac{1}{2}+V(\psi)\,,\label{eq:SF_energy_density_particular-1}\\
p &
=\frac{1}{2}\dot{\psi}^{2}-V\left(\psi\right)=\frac{1}{2}-V(\psi)\,,\label{eq:SF_pressure_particular-1}
\end{align}
where the last equalities follow from setting $v^{\alpha}:=\nabla^{\alpha}\psi$ and $v^{\alpha}v_{\alpha}=-1$.

From Eq.~\eqref{eq:conservation_timelike_perf_fluid}, and integrating along the fiducial curve of the congruence, we find
\begin{equation}
\rho\left(\tau\right)=\rho\left(\tau_{0}\right)+\left|\int_{\tau_{0}}^{\tau}\theta
dt\right|\,,\label{eq:SF_conservation_energy_integral_alt-1}
\end{equation}
where $\tau$ is the parameter along the curve's congruence. Assuming that the premises of the focus Theorem \ref{Theorem:timelike_focus_theorem} are verified, the function $\theta$ will diverge to negative infinite when $\tau$ goes to some finite value $\tau_{f}$. Then
\begin{equation}
\lim_{\tau\to\tau_{f}^{-}}\rho=+\infty\,,
\end{equation}
that is, the space-time becomes physically singular at $\tau=\tau_{f}$.
\end{proof} 

\subsection{Einstein-Cartan theory}

Theorem \ref{Theorem:singularities}, relating a focal point to  the formation of a singularity, is rather general in the sense that it is applicable to a wide class of theories of gravity. However, we had to specify the particular equation of state, either for a perfect fluid or a scalar field, in order to draw those results. One reason for this is that we have just used a general conservation law for $T$ and did not use the gravitational field equations of a particular gravity theory. Indeed, the field equations could enable us to directly relate the geometric conditions in the focus Theorem \ref{Theorem:timelike_focus_theorem} with constraints on the matter fields.

In this section, we will show that provided a geometric theory of gravity, it is possible to extend the results of Theorem \ref{Theorem:singularities} for perfect fluids without the need to impose an equation of state. In particular, we will consider the Einstein-Cartan theory defined in a $N=4$ dimensional Lorentzian manifold. The respective field equations read
\begin{align}
R_{\alpha\beta}-\frac{1}{2}g_{\alpha\beta}R &
+g_{\alpha\beta}\,\Lambda=8\pi
T_{\alpha\beta}\,,\label{eq:EC_field equations_1}\\
S^{\alpha\beta\gamma}+2g^{\gamma[\alpha}S^{\beta]}{}_{\mu}{}^{\mu}
& =-8\pi\Delta^{\alpha\beta\gamma}\,,\label{eq:EC_field
equations_2}
\end{align}
where $T_{\alpha\beta}$ represents the canonical stress-energy tensor, $\Delta^{\alpha\beta\mu}$ the intrinsic hypermomentum and $\Lambda$ the cosmological constant. Moreover, we have the following conservation laws \cite{obukhov,Hehl2}
\begin{equation}
\nabla_{\beta}T_{\alpha}{}^{\beta}=2S_{\alpha\mu\nu}T^{\nu\mu}-\frac{1}{4\pi}S_{\alpha\mu}{}^{\mu}\Lambda+\frac{1}{8\pi}\left(S_{\alpha\mu}{}^{\mu}R-S^{\mu\nu\sigma}R_{\alpha\sigma\mu\nu}\right)\,.\label{eq:EC_Conservation_equations}
\end{equation}
Computing the component $v_{\alpha}\nabla_{\beta}T^{\alpha\beta}$, we find
\begin{equation}
\begin{aligned}\nabla_{v}\rho+\left(\theta-W_{\alpha}^{\alpha}\right)\left(\rho+p\right)&+q_{1}^{\alpha}a_{\alpha}+\nabla_{\alpha}q_{2}^{\alpha}+\pi^{\alpha\beta}\left(\sigma_{\alpha\beta}-W_{\left(\alpha\beta\right)}\right)
+m^{\alpha\beta}\left(\omega_{\beta\alpha}-W_{\left[\beta\alpha\right]}\right)\\
& =-v^{\alpha}\left\{
2S_{\alpha\mu\nu}T^{\nu\mu}-\frac{1}{4\pi}S_{\alpha\mu}{}^{\mu}\Lambda+\frac{1}{8\pi}\left(S_{\alpha\mu}{}^{\mu}R-S^{\mu\nu\sigma}R_{\alpha\sigma\mu\nu}\right)\right\}
\,.
\end{aligned}
\label{eq:EC_energy_conservation_general}
\end{equation}
Now, from Eq.~\eqref{eq:EC_energy_conservation_general}, it is easy to verify that for a generic torsion tensor field we will have, on the right-hand side, terms that depend on the Weyl tensor, making it very difficult to infer, in general, the behavior of the relevant quantities without actually solving the field equations\footnote{This is also the case, even if we consider a torsion tensor field where $W_{\left(\alpha\beta\right)}=0$, as in Theorem \ref{Theorem:timelike_focus_theorem}.}. Nevertheless, we are able to prove the following result:
\begin{thm}
\label{Theorem:singularities_EC}
Consider a space-time $(M,g,S)$ in the context of the Einstein-Cartan theory of gravity, satisfying the conditions of Theorem \ref{Theorem:timelike_focus_theorem}. The space-time will
become physically singular at, or before, the formation of a focal point, if the following three conditions are verified:\leavevmode
\begin{enumerate}
\item the torsion tensor field can be written as $S_{\alpha\beta}{}^{\gamma}=S_{\alpha\beta}v^{\gamma}+v_{[\alpha}A_{\beta]}v^{\gamma}$;
\item the matter fluid is characterized by a canonical stress-energy tensor of the form $T_{\alpha\beta}=\rho v_{\alpha}v_{\beta}+p\,h_{\alpha\beta}$, and at some instant, $\tau=\tau_0$ , $\theta(\tau_0)<0 \wedge \rho(\tau_0)\geq0$;
\item the cosmological constant, $\Lambda$, verifies $\Lambda + S_{\alpha\beta}S^{\alpha\beta} \geq 0$.
\end{enumerate}
\end{thm}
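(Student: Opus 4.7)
The plan is to combine the Einstein--Cartan conservation law~\eqref{eq:EC_energy_conservation_general} with the field equation~\eqref{eq:EC_field equations_1} and the focusing condition of Theorem~\ref{Theorem:timelike_focus_theorem} in order to convert the divergence of $\theta$ at the focal point into a divergence of $\rho$ along the fiducial curve.

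First, I would substitute the torsion ansatz $S_{\alpha\beta}{}^{\gamma} = S_{\alpha\beta}v^{\gamma} + v_{[\alpha}A_{\beta]}v^{\gamma}$ into the decomposition~\eqref{eq:timelike_torsion_decomposition_components_def}. A direct calculation yields $W_{\alpha\beta}\equiv 0$ (so the hypothesis $W_{(\alpha\beta)} = 0$ of Theorem~\ref{Theorem:timelike_focus_theorem} is automatically satisfied), $W_{\alpha}{}^{\alpha}=0$, and $S_{\alpha\mu}{}^{\mu} = \tfrac12 A_{\alpha}$, with $A_{\alpha}v^{\alpha} = 0$ by construction. Contracting~\eqref{eq:EC_energy_conservation_general} with $v^{\alpha}$ and specialising to $T_{\alpha\beta} = \rho v_{\alpha}v_{\beta} + p\,h_{\alpha\beta}$, the LHS collapses (since $q_{1,2}=0$, $\pi_{\alpha\beta}=0$, $m_{\alpha\beta}=0$ and $W_{\alpha}{}^{\alpha}=0$) to $\dot{\rho}+\theta(\rho+p)$, while on the RHS each of the three torsion contributions vanishes: $v^{\alpha}S_{\alpha\mu\nu}T^{\nu\mu}$ and $v^{\alpha}S_{\alpha\mu}{}^{\mu}$ drop because $v^{\mu}S_{\mu\nu}=0=v^{\mu}A_{\mu}$, and $v^{\alpha}S^{\mu\nu\sigma}R_{\alpha\sigma\mu\nu}$ drops because $S^{\mu\nu\sigma}\propto v^{\sigma}$ while metric-compatibility provides the antisymmetry $R_{\alpha\sigma\mu\nu}=-R_{\sigma\alpha\mu\nu}$, forcing $v^{\alpha}v^{\sigma}R_{\alpha\sigma\mu\nu}=0$. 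The conservation law thus reduces to the familiar
\begin{equation*}
\dot{\rho} + \theta(\rho + p) = 0.
\end{equation*}

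Next, I would take the trace of~\eqref{eq:EC_field equations_1} in $N=4$ to obtain $R = 4\Lambda + 8\pi(\rho - 3p)$, and substitute back to get the standard projection $R_{\alpha\beta}v^{\alpha}v^{\beta} = 4\pi(\rho + 3p) - \Lambda$. Condition~2 of Theorem~\ref{Theorem:timelike_focus_theorem}, with $W_{[\alpha\beta]}=0$, then reads $4\pi(\rho+3p)-\Lambda\geq S_{\alpha\beta}S^{\alpha\beta}$, and hypothesis~3 of the present theorem yields $\rho+3p\geq 0$. Combined with $\rho\geq 0$ along the curve (which follows inductively from $\rho(\tau_0)\geq 0$ once $\dot{\rho}$ is controlled), this gives $\rho+p\geq \tfrac23 \rho\geq 0$.

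Finally, plugging into the reduced conservation law, $\dot{\rho} = -\theta(\rho+p)\geq \tfrac{2}{3}|\theta|\,\rho$ while $\theta<0$, so $\rho$ is non-decreasing and remains non-negative. Theorem~\ref{Theorem:timelike_focus_theorem} guarantees $\theta\to -\infty$ at a finite affine time $\tau_f$ with the Raychaudhuri-type bound $|\theta(\tau)|\gtrsim (N-1)/(\tau_f-\tau)$, so a Grönwall argument forces $\rho(\tau)\to+\infty$ at or before $\tau_f$, and Definition~\ref{Definition:Physical_singularity} identifies the resulting divergence as a physical singularity. I expect the main obstacle to be the verification that every torsion-induced source on the RHS of~\eqref{eq:EC_energy_conservation_general} truly vanishes for the chosen ansatz---in particular the cancellation $v^{\alpha}S^{\mu\nu\sigma}R_{\alpha\sigma\mu\nu}=0$, which relies essentially on both the factor $v^{\sigma}$ in the torsion and the metric-compatibility antisymmetry of the Riemann tensor, and which is what allows the Einstein--Cartan corrections to decouple cleanly, leaving a GR-like evolution for $\rho$ amenable to the Grönwall step.
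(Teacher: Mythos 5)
Your reduction of the conservation law and your use of the field equations agree with the paper: the ansatz $S_{\alpha\beta}{}^{\gamma}=S_{\alpha\beta}v^{\gamma}+v_{[\alpha}A_{\beta]}v^{\gamma}$ does give $W_{\alpha\beta}=0$, the torsion sources on the right of \eqref{eq:EC_Conservation_equations} do drop out against the perfect-fluid $T$ (the paper records the intermediate step as Eq.~\eqref{new-rel}), and tracing \eqref{eq:EC_field equations_1} to turn condition~2 of Theorem~\ref{Theorem:timelike_focus_theorem} into $4\pi(\rho+3p)-\Lambda\geq S_{\alpha\beta}S^{\alpha\beta}$, hence $\rho+3p\geq0$ under hypothesis~3, is exactly Eq.~\eqref{eq:EC_generalized_SEC}. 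Up to and including $\dot\rho+\theta(\rho+p)=0$ and $\rho+p\geq\tfrac23\rho$ you are on the paper's track.

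The final Gr\"onwall step, however, contains a genuine error. You invoke a lower bound $|\theta(\tau)|\gtrsim (N-1)/(\tau_f-\tau)$, but the Raychaudhuri inequality $\dot\theta\leq-\theta^{2}/(N-1)$ integrates to $\tfrac{d}{d\tau}(\theta^{-1})\geq\tfrac{1}{N-1}$ and hence to the \emph{opposite} bound $|\theta(\tau)|\leq (N-1)/(\tau_f-\tau)$ near the blow-up; no pointwise lower bound on $|\theta|$ is available, and $\theta\to-\infty$ at finite $\tau_f$ does not by itself imply $\int_{\tau_0}^{\tau_f}|\theta|\,d\tau=+\infty$ (consider $\theta\sim-(\tau_f-\tau)^{-1/2}$, which is compatible with the differential inequality near $\tau_f$). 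So your estimate $\rho\geq\rho_0\exp\bigl(\tfrac23\int|\theta|\bigr)$ does not yet force $\rho\to+\infty$. The missing ingredient, which is how the paper closes the argument, is the kinematic identity $\theta=\partial_v\ln\sqrt{\det(h)}$: the divergence of $\int\theta\,d\tau$ to $-\infty$ is equivalent to the vanishing of the cross-sectional volume element at the focal point. The paper exploits this by changing the independent variable to $L:=(\det h)^{1/6}$, so that the conservation law becomes $\tfrac{d}{dL}(\rho L^{2})=-L(\rho+3p)\leq0$ and hence $\rho\geq\rho_0 L_0^{2}/L^{2}\to+\infty$ as $L\to0$. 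Your exponential bound is in fact the same estimate in disguise, since $\exp\bigl(-\tfrac23\int\theta\,d\tau\bigr)=(L_0/L)^{2}$; what must be replaced is the spurious pointwise bound on $|\theta|$ by the statement that $L\to0$ at the focal point. (A residual caveat, shared with your exponential form, is that the lower bound $\rho_0 L_0^2/L^2$ is only useful when $\rho(\tau_0)>0$; for $\rho(\tau_0)=0$ one must appeal to the positive source term $L^{-2}\int_L^{L_0}l(\rho+3p)\,dl$ that your $\rho+p\geq\tfrac23\rho$ simplification discards.)
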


\begin{proof}
We start by noting that, using condition 1, where the components $S_{\alpha\beta}$ and $A_{\alpha}$ were defined in Eq.~\eqref{eq:timelike_torsion_decomposition_components_def}, Eq.~\eqref{eq:EC_energy_conservation_general} reads
\begin{equation}
\label{new-rel}
\nabla_{v}\rho+\theta\left(\rho+p\right)+q_{1}^{\alpha}a_{\alpha}+q_{2}^{\alpha}A_{\alpha}+\nabla_{\alpha}q_{2}^{\alpha}+\pi^{\alpha\beta}\sigma_{\alpha\beta}+m^{\alpha\beta}\omega_{\beta\alpha}=0,
\end{equation}
and using condition 2, this equation reduces simply to Eq.~\eqref{eq:conservation_timelike_perf_fluid}.  Now, consider $h$ as the induced metric on each $N-1$ hypersurface, orthogonal to $v$, with components $h_{ab}$ in a coordinate basis. It is easy to show that the definition \eqref{eq:timelike_expansion_def} for the expansion scalar is equivalent to
\begin{equation}
\theta=h^{ab}\mathcal{L}_{v}h_{ab}=\frac{1}{\sqrt{\det\left(h\right)}}\partial_{v}\sqrt{\det\left(h\right)}\,.
\end{equation}
Defining $L:=\left(\sqrt{\det\left(h\right)}\right)^{\frac{1}{3}}$ along the integral curve of $v$, we can rewrite Eq.~\eqref{eq:conservation_timelike_perf_fluid} as
\begin{equation}
\frac{d}{dL}\left(\rho
L^{2}\right)+L\left(\rho+3p\right)=0\,.\label{eq:EC_conservation_energy_alt}
\end{equation}
The second condition in Theorem \ref{Theorem:timelike_focus_theorem}, $R_{\alpha\beta}v^{\alpha}v^{\beta}\geq\left(W_{\left[\alpha\beta\right]}+S_{\alpha\beta}\right)S^{\alpha\beta}$, in the considered Einstein-Cartan setup, implies
\begin{equation}
4\pi\left(3p+\rho\right)-\Lambda\geq
S_{\alpha\beta}S^{\alpha\beta}\,.\label{eq:EC_generalized_SEC}
\end{equation}
Integrating Eq.~\eqref{eq:EC_conservation_energy_alt} along the integral curve of $v$ yields
\begin{equation}
\rho=\frac{\left.\left(\rho
L^{2}\right)\right|_{\tau_{0}}}{L^{2}}+\frac{1}{L^2}\int_{L}^{L_{0}}l
(\rho+3p)\,dl\,,
\end{equation}
where $\tau_{0}$ represents the value of the affine parameter of the integral curve of $v$, defined so that $L\left(\tau_{0}\right)=L_{0}>0$. Taking the limit when $L\to0$, i.e. at the formation of a focal point, and assuming $\Lambda + S_{\alpha\beta}S^{\alpha\beta} \geq 0$, then from Eq.~\eqref{eq:EC_generalized_SEC}, we find that the second term in the previous equation is positive and, therefore, $\rho \to +\infty$.
\end{proof}

\section{Conclusions and further discussion}

Our results show how the presence of torsion affects the formation of focal points of a congruence of curves on a Lorentzian manifold. In particular, we found how torsion explicitly appears in the conditions of the newly derived focus theorems which, in turn, can be used to study singularity formation. 
Using the focus theorems, we extended the Raychaudhuri-Komar singularity theorem to space-times endowed with a non-null torsion tensor field, for a wide class of affine theories of gravity. Moreover, we discussed both the cases of metric geodesics and accelerated curves. This is an important point from the physical point of view. Take, for instance, the somewhat simpler case of the theory of General Relativity. In that case, single-pole massive particles will follow geodesics of the space-time \cite{Papapetrou}. However, this is only verified if no gravitational radiation due to the particle's motion is taken into consideration \cite{Papapetrou, Poisson_Pound}. Moreover, not only gravitational radiation will spoil the geodesic motion of single-pole particles but also charged particles in curved space-times will be accelerated, as described by the deWitt-Brehme-Hobbs equation \cite{DeWitt, Hobbs}. Regarding the motion of massive particles in a space-time with non-vanishing torsion, the situation is even more complex. As shown by Puetzfeld and Obukhov \cite{obukhov}, for an affine gravity theory minimally coupled to the matter fields, only single-pole massive particles with null intrinsic hypermomentum will follow geodesics of the space-time. However, this result was found by assuming that no radiation is emitted. It is then expected that, analogously to the case of space-times with null torsion, self-force effects will give rise to non-zero acceleration. The results in this article address these cases by showing how the presence of acceleration and torsion affects the formation of focal points of a congruence. 

\section*{Acknowledgments}

We are grateful to José Natário, Miguel Sanchez and José Senovilla for useful discussions. PL thanks IDPASC and FCT-Portugal for financial support through Grant No. PD/BD/114074/2015. FM thanks: FCT project PTDC/MAT-ANA/1275/2014; CAMGSD, IST, Univ. Lisboa, through FCT project UID/MAT/04459/2013; CMAT, Univ. Minho, through FCT project Est-OE/MAT/UI0013/2014 and FEDER Funds COMPETE.

\end{document}